\documentclass[a4paper]{article}

\usepackage[english]{babel}
\usepackage[utf8x]{inputenc}
\usepackage[T1]{fontenc}
\usepackage{graphicx}
\usepackage{caption, subcaption}
\usepackage[title]{appendix}
\usepackage{tikz}
\usepackage{pgfplots}
\usepackage{cutwin}
\graphicspath{ {images/} }
\usetikzlibrary{intersections,decorations.pathreplacing,decorations.markings,calc,angles,quotes,arrows.meta,pgfplots.fillbetween,patterns}

\usepackage[a4paper,top=2.8cm,bottom=3cm,left=2.7cm,right=2.7cm,marginparwidth=1.75cm]{geometry}

\usepackage{amsmath,yhmath,amsfonts}
\usepackage{amsthm}
\usepackage{amssymb}
\usepackage[hidelinks]{hyperref}
\usepackage{cleveref}
\usepackage{epstopdf}
\usepackage{comment}
\usepackage{todonotes}
\usepackage{float}
\usepackage[sort&compress,numbers]{natbib}
\captionsetup{font={small,it}}

\newtheorem{thm}{Theorem}
\newtheorem{lem}[thm]{Lemma}

\theoremstyle{definition}

\newtheorem{rmk}[thm]{Remark}
\numberwithin{equation}{section}
\numberwithin{thm}{section}

\newcommand{\C}{\mathcal{C}}
\newcommand{\Cb}{\mathbb{C}}
\newcommand{\D}{\mathcal{D}}

\newcommand{\R}{\mathbb{R}}

\renewcommand{\S}{\mathcal{S}}

\newcommand{\Z}{\mathbb{Z}}
\newcommand{\p}{\partial}
\renewcommand{\epsilon}{\varepsilon}
\newcommand{\dx}{\: \mathrm{d}}

\newcommand{\uf}{\mathfrak{u}}

\newcommand{\Cf}{\mathfrak{C}}

\newcommand{\frm}{\mathrm{f}}
\newcommand{\trm}{\mathrm{t}}

\newcommand{\nm}{\noalign{\smallskip}}
\newcommand{\ds}{\displaystyle}
\newcommand{\iu}{\mathrm{i}\mkern1mu}

\DeclareMathOperator*{\argmax}{argmax}
\newcommand{\ddp}[2]{\frac{\partial#1}{\partial#2}}

\makeatletter
\newcommand{\neutralize}[1]{\expandafter\let\csname c@#1\endcsname\count@}
\makeatother

\title{Spectral convergence in large finite resonator arrays: the essential spectrum and band structure}

\author{
	Habib Ammari\thanks{\footnotesize Department of Mathematics,
		ETH Zurich, Zurich, Switzerland (habib.ammari@math.ethz.ch).}\and Bryn Davies\thanks{\footnotesize Department of Mathematics, Imperial College London, London, UK (bryn.davies@imperial.ac.uk).} \and Erik Orvehed Hiltunen\thanks{\footnotesize Department of Mathematics, Yale University, New Haven, USA (erik.hiltunen@yale.edu).}}
\date{}

\begin{document}
\maketitle
	
\begin{abstract}
	We show that resonant frequencies of a system of coupled resonators in a truncated periodic lattice converge to the essential spectrum of corresponding infinite lattice. We use the capacitance matrix as a model for fully coupled resonators with long-range interactions in three spatial dimensions. For one-, two- or three-dimensional lattices embedded in three-dimensional space, we show that the (discrete) density of states for the finite system converge in distribution to the (continuous) density of states of the infinite system. We achieve this by proving a weak convergence of the finite capacitance matrix to corresponding (translationally invariant) Toeplitz matrix of the infinite structure. With this characterization at hand, we use the truncated Floquet transform to introduce a notion of spectral band structure for finite materials. This principle is also applicable to structures that are not translationally invariant and have interfaces. We demonstrate this by considering examples of perturbed systems with defect modes, such as an analogue of the well-known interface Su-Schrieffer-Heeger (SSH) model.
\end{abstract}
\vspace{0.5cm}
\noindent{\textbf{Mathematics Subject Classification (MSC2010):} 35J05, 35C20, 35P20.
	
	\vspace{0.2cm}
	
	\noindent{\textbf{Keywords:}} finite periodic structures, essential spectrum convergence, edge effects, subwavelength resonance, density of states, multilevel Toeplitz matrix, van Hove singularity
	\vspace{0.5cm}

\section{Introduction}

The spectra of periodic elliptic operators have significant implications for many physical problems and have been studied extensively, as a result. In most cases, this analysis is relatively straightforward, since the spectrum can be decomposed into a sequence of continuous bands using Floquet-Bloch theory \cite{kuchment1993floquet, brillouin1946book}. Meanwhile, the spectra of elliptic operators on finite domains are quite different in nature but are similarly convenient to handle, in most cases. Such problems typically have a discrete spectrum that can be described using a variety of standard techniques. A more subtle question, however, is how to relate these two spectra. 

In the physical and experimental literature on waves in periodic structures, the link between the spectra of finite and infinite structures is made routinely. For example, Floquet-Bloch analysis of infinite structures is commonly used to predict the behaviour of the equivalent truncated version, which can be realised in experiments or simulations. Similarly, measurements from experiments or simulations are often used to recreate the Floquet-Bloch spectra band structure by taking Floquet transforms of spatially distributed data. This work will make the link between these two systems precise, by clarifying how the spectrum of the finite structure converges to that of the corresponding infinite structure, as its size becomes arbitrarily large.

In this work, we will study the capacitance matrix as a model for a system of coupled resonators. This is a model that describes the resonant modes of a system of $N$ resonators in terms of the eigenstates of an $N\times N$ matrix. This matrix is defined in terms of Green's function operators, posed on the boundaries of the resonators. This use of boundary integral formulations allows the model to describe a broad class of resonator shapes \cite{colton2013integral, ammari2009layer}. A crucial feature of the model is that it takes into account long-range interactions between the resonators; interactions between a pair of resonators scale in inverse proportion to the distance between them. This is in contrast to many tight-binding Hamiltonian formulations, which often use nearest-neighbour approximations. The capacitance matrix model was first introduced by Maxwell to model the relationship between the distributions of potential and charge in a system of conductors \cite{maxwell1873treatise}. More recently, it has been shown that the capacitance matrix model also captures the subwavelength resonant modes of a system of high-contrast resonators \cite{ammari2021functional}. Our motivation for using it as the basis for this work is that it serves as a canonical model for a fully coupled system of resonators, which has long-range interactions between the resonators decaying in proportion to the distance. 

An important subtlety of the model considered in this work is that there are no energy sources or damping mechanisms, such that the system is time-reversal symmetric. This means that the capacitance matrix model used in this work is a Hermitian matrix. While generalized capacitance matrix models have been developed for non-Hermitian systems \cite{ammari2020exceptional}, for non-Hermitian models we generally expect drastically different behaviour, whereby the finite and infinite systems have fundamentally different spectra. A symptom of this is the non-Hermitian \emph{skin effect}, whereby all eigenmodes are localised at one end of a finite-sized structure, in certain systems \cite{zhang2022review}.

As the system of finitely many resonators becomes large, the corresponding capacitance matrix also grows in size. Thus, the problem at hand is to understand the asymptotic distribution of the eigenvalues of the capacitance matrix, in the limit that its size becomes arbitrarily large. While this is an open question for the capacitance matrix, similar results exist for other classes of matrices. For example, the asymptotic distribution of the eigenvalues of banded matrices has been studied \cite{bourget2015density, geronimo1988asymptotic}. The capacitance matrix is not banded, but it is ``almost'' banded in the sense that the entries decay in successive off-diagonals. Similarly, there is an established theory describing the limiting spectra of Toeplitz matrices \cite{gray1972asymptotic, tilli1997asymptotic, tyrtyshnikov1998spectra}. Once again, the capacitance matrix is not Toeplitz, but for a periodic resonator array it is known to converge to a doubly infinite matrix that has constant (block) diagonals \cite{anderson}. The crux of this work is to use the properties of the capacitance matrix, as summarised in \emph{e.g.} \cite{diaz2011positivity, ammari2021functional}, to develop an analogous asymptotic eigenvalue distribution theory.

In many of the existing theories on asymptotic eigenvalue distributions, the crucial quantity is the \emph{density of states} (DOS) \cite{bourget2015density, geronimo1988asymptotic}. This is a measure that describes the distribution of eigenmodes in frequency space and is given, for a finite array, by
\begin{equation}
D_\frm(\omega) =\frac{1}{M} \sum_{j=1}^M \delta\Big(\omega-\omega_j^{(M)}\Big),
\end{equation}
where $\omega_1^{(M)}, \omega_2^{(M)}, \dots, \omega_M^{(M)}$ are the $M$ eigenvalues of the $M$-resonator system. One of the main results of this work is showing that $D_\frm$ converges (in the sense of distributions) to the density of states $D$ for the corresponding infinite system (which can be obtained via Floquet-Bloch analysis):
\begin{equation}
D_\frm\to D.
\end{equation}
The proof of this result is based on comparing the capacitance matrix for a finite system of resonators with the finite-sized matrix obtained by taking the matrix arising from the infinite array of resonators and truncating it to be $M$-by-$M$. The key insights are that (i) these two finite-sized matrices converge to the same limit (under a normalised Frobenius norm) as the size becomes large and (ii) the truncated infinite matrix is a block Toeplitz matrix, meaning we can use existing theory. This approach proves to be immensely useful and will allow us to prove not only the convergence of the density of states, but also a theorem demonstrating that this spectral convergence is in fact pointwise. That is, given an eigenvalue $\omega$ of the infinite structure and a positive number $\epsilon$, any sufficiently large finite structure will have an eigenvalue $\omega_\frm$ such that
\begin{equation}
|\omega_\frm-\omega|<\epsilon.
\end{equation}
Further, we will see that a similar convergence result holds for the corresponding eigenvectors, also.

This paper will begin by introducing the capacitance matrix model in \Cref{sec:capacitance}. This is accompanied by an asymptotic derivation of the model from a three-dimensional differential problem (with high-contrast subwavelength resonators) in \Cref{app:cont}, for context. In \Cref{sec:dos}, we will develop the theory needed to prove the convergence of the density of states. This is followed by results on pointwise convergence in \Cref{sec:bands}, which are accompanied by a demonstration of how the Floquet-Bloch spectral bands can be reconstructed from a finite structure using the Floquet transform. Finally, in \Cref{sec:nonperiodic} we present some open questions and possible avenues for future work.

\section{Capacitance matrix model} \label{sec:capacitance}
Throughout this work, we will use a capacitance matrix model for a system coupled resonators. We can view this as a canonical model for coupled resonators with long-range interactions (the interactions are inversely proportional to the distance between the resonators). This model can be derived from first principles to describe either a system of conductors \cite{maxwell1873treatise} or a system of high-contrast resonators, which is summarised in \Cref{app:cont}.

We consider a periodically repeating system in $\R^3$. We take a lattice $\Lambda$ of dimension $d$, where $0<d\leq3$, generated by the lattice vectors $l_1, \dots, l_{d}\in \R^3$. For simplicity, we take $\Lambda$ to be aligned with the first $d$ coordinate axes. We will refer to the three possible lattice dimensions as, respectively, a \emph{chain} of resonators ($d=1$), a \emph{screen} of resonators ($d=2$), and a \emph{crystal} of resonators ($d=3$). We take $Y\subset \R^3$ to be a single unit cell,
$$Y = \begin{cases}
	\{c_1 l_1 + x_2 e_2 + x_3e_3 \mid 0\leq c_1\leq 1, x_{2},x_3 \in \R \}, & d=1, 
	\\
	\{c_1 l_1 + c_2 l_2 + x_3e_3 \mid 0\leq c_1,c_2\leq 1, x_3 \in \R \}, & d=2,
	\\
	\{c_1 l_1 + c_2 l_2 + c_3l_3 \mid 0\leq c_1,c_2,c_3\leq 1 \}, &d=3.
\end{cases}$$
To define the finite lattice, we let $I_r \subset \Lambda $ be all lattice points within distance $r$ from the origin;
$$I_r = \{m \in \Lambda \mid |m| < r \}.$$

The resonators are given by inclusions of a heterogeneous material surrounded by some heterogeneous background medium. We let $D\subset Y$ be a collection of $N$ resonators contained in $Y$,
$$D = \bigcup_{i=1}^N D_i,$$
where $D_n$ are disjoint domains in $Y$ with boundary $\p D_i \in C^{1,s}$ for $s>0$. We will use $D$ to denote the collection of resonators contained within a single unit cell of the periodic lattice. We can subsequently define the \emph{periodic} system $\D$ and the \emph{finite} system $\D_\frm$ of resonators, respectively, as
$$\D=\bigcup_{m\in\Lambda} D+m, \quad \text{and} \quad \D_\frm(r) = \bigcup_{m\in I_r} D + m.$$
Here, $\D$ is the full lattice of resonators while $\D_\frm$ is a finite lattice of resonators of width $r$. For $i=1,...,N$ and $m\in \Lambda$, we let $D_i^m$ denote the $i$\textsuperscript{th} resonator inside the $m$\textsuperscript{th} cell:
$$D_i^m = D_i + m.$$

Next, we will define the capacitance coefficients associated to $\D$ and $\D_\frm$, starting with the finite structure $\D_\frm$. Let $G$ be the Green's function for Laplace's equation in three dimensions:
$$G(x) = -\frac{1}{4\pi|x|}.$$
Given a smooth, bounded domain $\Omega \subset \R^3$, the \emph{single layer potential}  $\mathcal{S}_\Omega: L^2(\p \Omega) \to H^1(\p \Omega)$ is defined as
$$\mathcal{S}_\Omega[\varphi](x) := \int_{\partial \Omega} G(x-y) \varphi(y) \dx\sigma(y),\quad x\in \p \Omega.$$
Crucially for the analysis that will follow, $\S_\Omega$ is known to be invertible \cite{ammari2009layer}. For the finite lattice $\D_\frm$, we define the capacitance coefficients as 
\begin{equation} \label{eq:Cfinite}
	C^{mn}_{\frm,ij}(r) = \int_{\p D_i^m} \S_{\D_\frm}^{-1}[\chi_{\p D_j^n}]  \dx \sigma,
\end{equation}
for $1\leq i,j\leq N$ and $m,n \in I_r$, where $\chi_A(x)$ denotes the indicator function of the set $A\subset \R^3$. Here, we explicitly indicate the dependence of the size $r$ of the truncated lattice. For $m,n \in I_r$, we observe that $C^{mn}_\frm(r)$ is a matrix of size $N\times N$, while the block matrix $C_\frm = (C^{mn}_\frm)$ is a matrix of size  $N|I_r|\times N|I_r|$.

We can define analogous capacitance coefficients for the infinite structure $\D$. We begin by defining the dual lattice $\Lambda^*$ of $\Lambda$ as the lattice generated by the dual lattice vectors $\hat{\alpha}_1,...,\hat{\alpha}_{d}$ satisfying $ \hat{\alpha}_i\cdot l_j = 2\pi \delta_{ij}$  for $i,j = 1,...,d$ and whose projection onto the orthogonal complement of $\Lambda$ vanishes. We define the \emph{Brillouin zone} $Y^*$ as $Y^*:= \big(\R^{d}\times\{\mathbf{0}\}\big) / \Lambda^*$, where $\mathbf{0}$ is the zero-vector in $\R^{3-d}$. We remark that $Y^*$ can be written as $Y^*=Y^*_d\times\{\mathbf{0}\}$, where  $Y^*_d$ has the topology of a torus in $d$ dimensions.

When $\alpha\in Y^*\setminus \{0\}$, we can define the quasi-periodic Green's function $G^{\alpha}(x)$ as
\begin{equation}\label{eq:xrep}
	G^{\alpha}(x) := \sum_{m \in \Lambda} G(x-m)e^{\iu \alpha \cdot m}.
\end{equation}
The series in \eqref{eq:xrep} converges uniformly for $x$ and $y$ in compact sets of $\R^d$, with $x\neq y$ and $\alpha \neq 0$. Given a bounded domain $\Omega \subset Y$, the \emph{quasi-periodic} single layer potential  $\mathcal{S}_\Omega^{\alpha}: L^2(\p \Omega) \to H^1(\p \Omega)$ is then defined as
\begin{equation}
	\mathcal{S}_\Omega^{\alpha}[\varphi](x) := \int_{\partial \Omega} G^{\alpha} (x-y) \varphi(y) \dx\sigma(y),\quad x\in \p \Omega.
\end{equation}
For $\alpha \in Y^*$ and for $1\leq i,j\leq N$, we have a ``dual-space'' representation of the infinite capacitance matrix as the $N\times N$-matrix
\begin{equation} \label{eq:Calpha}
	\widehat{C}_{ij}^\alpha = \int_{\p D_i} (\S_D^\alpha)^{-1}[\chi_{\p D_j}] \dx \sigma.
\end{equation}
This is a ``dual-space'' representation in the sense that it is parametrised by $\alpha$ which is the Floquet-Bloch parameter that describes the frequency of spatial oscillation of the eigenmodes. Thus, we can equivalently define a ``real-space'' representation of the capacitance coefficients through an appropriate transformation. That is, for $1\leq i,j\leq N$, the ``real-space'' capacitance coefficients at the lattice point $m$ are given by
\begin{equation} \label{eq:Crealspace}
	C_{ij}^m = \frac{1}{|Y^*|}\int_{Y^*} \widehat{C}_{ij}^\alpha e^{-\iu \alpha\cdot m}\dx \alpha.
\end{equation}
Here, $C_{ij}^0$ corresponds to the diagonal block which contains the capacitance coefficients of the resonators within a single unit cell. We use the notation $\Cf$ to denote the infinite matrix that contains all the $C_{ij}^m$ coefficients, for all $1\leq i,j\leq N$ and all $m\in\Lambda$.

The main results in this work are based on relating the finite capacitance matrix $C_\frm$ to the \emph{truncated} capacitance matrix $C_\trm$. This matrix is obtained by truncating $\Cf$ to the centre block of size $N|I_r|\times N|I_r|$, to give a matrix of the same dimensions as $C_\frm$. The main technical result of this work is \Cref{lem:ae}, which ascertains a type of weak convergence of $C_\frm$ to $C_\trm$ as $r\to \infty$. 

The spectra of the infinite structure and the finite structure, respectively, are given by the solutions $\omega$ to the spectral problems
\begin{equation}
	\Cf \uf = \omega^2 \uf \quad \text{and} \quad C_\frm u = \omega^2 u.
\end{equation}
The goal of this work is to compare spectral properties of the infinite structure and the finite structure. Specifically, this work will focus on the convergence of eigenvalues of $C_\frm$ to the \emph{essential} spectrum of $\Cf$; the convergence of pure-point spectra  (defect modes) has  already been treated in \cite{ammari2023defect}. Throughout, we let $\widehat{\omega}_k(\alpha),$ for $k=1,...,N$ denote the positive eigenvalues of the quasi-periodic capacitance matrix problem 
$$\widehat{C}^\alpha u = \left(\widehat{\omega}_k(\alpha)\right)^2u, \quad k = 1,...,N,$$ 
and let $\omega_i$, for $i=1,...,N|I_r|$ denote the positive eigenvalues of the finite capacitance matrix problem
$$C_\frm u = \left(\omega_i\right)^2u, \quad i = 1,...,N|I_r|.$$

We conclude this section with the following convergence result of the capacitance coefficients, which was proved in \cite{ammari2023defect}.
\begin{thm} \label{thm:approx}
	For fixed $m,n \in \Lambda$, we have as $r\to \infty$, 
	$$\lim_{r\to \infty} C^{mn}_\frm(r) = C^{m-n},$$
	where $C^{mn}_\frm = (C^{mn}_{\frm,ij})_{i,j=1}^N$ and $C^{m-n} = (C^{m-n}_{ij})_{i,j=1}^N$ denote, respectively, the $N\times N$ matrices defined in \eqref{eq:Cfinite} and \eqref{eq:Crealspace}.
\end{thm}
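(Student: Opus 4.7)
The plan is to express both sides of the claimed identity through single layer potential densities and then exploit the decay of the Green's function together with uniform bounds on the inverse operators. First, set $\psi^{n,r}_j := \S_{\D_\frm}^{-1}[\chi_{\p D_j^n}]$, so that by definition $C^{mn}_{\frm,ij}(r) = \int_{\p D_i^m}\psi^{n,r}_j\dx\sigma$. Analogously, using the Floquet inversion formula \eqref{eq:Crealspace} together with \eqref{eq:Calpha}, one can identify $C^{m-n}_{ij}$ with $\int_{\p D_i^m}\psi^n_j\dx\sigma$, where $\psi^n_j \in L^2_{loc}(\p\D)$ is the Floquet inverse transform of $\alpha\mapsto(\S_D^\alpha)^{-1}[\chi_{\p D_j}]$ (suitably translated to $\p D_j^n$). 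A direct computation using the definition of $\S_\D$ and quasi-periodicity shows that $\psi^n_j$ satisfies $\S_\D\psi^n_j = \chi_{\p D_j^n}$ on $\p\D$, so that the problem reduces to a purely operator-theoretic comparison.

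The proof then proceeds by comparing $\psi^{n,r}_j$ (extended by zero to all of $\p\D$, call this $\tilde\psi^{n,r}_j$) with $\psi^n_j$. By construction, $\S_\D\tilde\psi^{n,r}_j = \chi_{\p D_j^n}$ on $\p\D_\frm$, while on $\p\D\setminus\p\D_\frm$ there is a residual $\rho_r := \S_\D\tilde\psi^{n,r}_j$ arising purely from the long-range couplings $G(x-y)$ with $y\in\p\D_\frm$ and $x\not\in\p\D_\frm$. Hence the difference $\delta_r := \psi^n_j - \tilde\psi^{n,r}_j$ satisfies $\S_\D\delta_r = -\rho_r$ on $\p\D\setminus\p\D_\frm$ and $\S_\D\delta_r = 0$ on $\p\D_\frm$. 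Inverting and pairing against $\chi_{\p D_i^m}$ yields an expression for $C^{mn}_{\frm,ij}(r) - C^{m-n}_{ij}$ as the coupling, through $\S_\D^{-1}$, between the fixed resonator $D_i^m$ and resonators lying outside $\D_\frm$, whose minimum distance from $D_i^m$ grows like $r$.

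The main obstacle is making these estimates quantitative and uniform in $r$. Two ingredients are needed: (i) the matrix elements of $\S_\D^{-1}$ between resonators $D_{i'}^{m'}$ and $D_{i''}^{m''}$ must decay with $|m'-m''|$, which one should extract from the regularity of $\alpha\mapsto(\S_D^\alpha)^{-1}$ on the torus $Y^*$ via a Paley--Wiener style Fourier decay argument; and (ii) uniform-in-$r$ invertibility of $\S_{\D_\frm}$ is required to control $\psi^{n,r}_j$ as $r\to\infty$. The latter is delicate since the conditionally summable tail of $G$ in low dimensions ($d=1,2$) precludes a direct Neumann series. The cleanest route to (ii) is to reduce $\S_{\D_\frm}^{-1}$ through an approximate Floquet decomposition to the family $\{\S_D^\alpha\}_{\alpha\in Y^*}$, which is uniformly bounded below on the compact Brillouin zone; combining this with the decay from (i) finishes the proof.
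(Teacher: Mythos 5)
The paper does not actually prove this theorem: it is quoted from the companion work \cite{ammari2023defect}, so there is no in-paper argument to compare against, and your proposal has to be judged on its own terms. Your skeleton is the natural one --- rewrite both sides as pairings of $\chi_{\p D_i^m}$ against densities, reduce the theorem to convergence of those densities, and isolate the two load-bearing ingredients (decay of the inter-cell couplings and uniform-in-$r$ control of $\S_{\D_\frm}^{-1}$). The difficulty is that those two ingredients are exactly where all the work lies, and the routes you propose for them do not go through as stated.

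For ingredient (i), a Paley--Wiener argument from the regularity of $\alpha\mapsto(\S_D^\alpha)^{-1}$ on $Y^*$ fails: the quasi-periodic Green's function $G^\alpha$ is only defined for $\alpha\neq 0$ and diverges as $\alpha\to 0$, so the symbol is singular at the origin of the Brillouin zone and its Fourier coefficients decay only polynomially --- consistent with the $1/|x|$ decay of the free Green's function and with the bound of the form $K_1/((1+|m|)(1+|n|))$ used in the proof of \Cref{lem:ae}. The decay must therefore be extracted directly from the decay of $G$ by a potential-theoretic estimate, not from smoothness on the torus. The same singularity undermines the well-definedness of the infinite-lattice operator $\S_\D$ that your identity $\S_\D\psi^n_j=\chi_{\p D_j^n}$ presupposes: the kernel $G(x-y)$ is not absolutely summable over the lattice, so this equation (and hence your manipulation of $\delta_r$ and $\rho_r$, and the inversion of $\S_\D$ on the complement of $\p\D_\frm$) needs a careful conditional or Floquet-side interpretation before it can be used. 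For ingredient (ii), an ``approximate Floquet decomposition'' of $\S_{\D_\frm}$ is not a rigorous device: the finite operator is not translation invariant, and making that approximation precise is essentially equivalent in difficulty to the theorem you are trying to prove. The uniform invertibility has to come from elsewhere, e.g.\ coercivity of $-\S_\Omega$ on $H^{-1/2}(\p\Omega)$ with constants controlled under the growth of the array, or Schur-complement block-inverse manipulations of the kind appearing in the proof of \Cref{lem:ae}. As it stands, the proposal is a plausible plan with two unproven lemmas at its core, and the specific arguments you sketch for those lemmas would fail.
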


\section{Eigenvalue distribution and essential spectral convergence} \label{sec:dos}
The main goal of this section is to prove the distributional convergence of the density of states of the finite and infinite materials. The main technical result is \Cref{lem:ae}, which establishes the convergence of the finite and truncated capacitance matrices in a certain weak ``averaged'' norm. We emphasise that the finite and truncated capacitance matrices are not expected to converge strongly in the matrix operator norm. This is due to the fact that, regardless of its size, the finite structure will always exhibit edge effects.

\subsection{Density of states}
For an $N$-level system with band functions $\widehat{\omega}_k(\alpha)$, $k=1,...,N$, we define the ``finite-material'' density of states $D_\frm(\omega)$ and ``infinite-material'' denisty of states $D(\omega)$ as the distributions (see, for example, \cite{ziman1972principles,economou2006green})
\begin{equation}\label{eq:D}
	D_\frm(\omega) = \frac{1}{N|I_r|}\sum_{i=1}^{N|I_r|}\delta(\omega - \omega_i) \quad \text{and} \quad
D(\omega) = \frac{1}{(2\pi)^d}\int_{Y^*}\frac{1}{N}\sum_{k=1}^N \delta\bigl(\omega - \widehat{\omega}_k(\alpha) \bigr)\dx \alpha.
\end{equation}
Let $L(\omega) = \{\alpha \in Y^* \mid \widehat{\omega}_k(\alpha) = \omega \text{ for some } k\}$ be the level set of $\widehat{\omega}_k(\alpha)$ at $\omega$. Carrying out the integral in \eqref{eq:D}, we can rewrite $D$ as 
\begin{equation} \label{eq:dos}
	D(\omega) = \frac{1}{(2\pi)^d}\int_{L(\omega)}\frac{1}{N}\sum_{k=1}^N \frac{1}{|\nabla_\alpha\widehat{\omega}_k(\alpha)|}\dx \sigma(\alpha),
\end{equation}
where $\dx \sigma(\alpha)$ is the surface measure on $L(\omega)$. Points $\alpha$ where $|\nabla_\alpha\widehat{\omega}_k(\alpha)|=0$ are known as \emph{van Hove singularities} and occur around any band edge \cite{van1953occurrence}. 

\subsection{Finite-material eigenvalue distribution}
The truncated matrix $C_\trm$ of size $m\in \Z^d$ is a multilevel block Toeplitz matrices, with known asymptotic eigenvalue distribution in terms of the eigenvalues $\widehat{\omega}_k(\alpha)$ of the quasi-periodic capacitance matrix $C^\alpha$ (see, e.g. \cite{gazzah2001asymptotic,gray1972asymptotic} for the one-dimensional case and  \cite{tyrtyshnikov1998spectra,tilli1998note} for the two- and three-dimensional cases). Based on these results, we will show that the finite capacitance matrix $C_\frm$ has identical eigenvalue distribution to $C_\trm$ as the size tends to infinity. 	

For an $n\times n$  matrix $M$, let $|M|$ denote the normalized Frobenius norm
\begin{equation}
|M|^2 = \frac{1}{n}\sum_{i,j=1}^n|m_{i,j}|^2.
\end{equation}
We will use $\|M\|_2$ for the standard Euclidean matrix norm. The following lemma is the main technical result we will need.
\begin{lem}\label{lem:ae}
	As $r\to \infty$, the matrices $C_\trm$ and $C_\frm$ are asymptotically equivalent, in other words, it holds that
	\begin{itemize}
		\item $\ds \lim_{r\to \infty}|C_\frm - C_\trm| = 0$;
		\item $\|C_\frm\|_2$ and $\|C_\trm\|_2$ are uniformly bounded as $r\to \infty$.
	\end{itemize}
	
\end{lem}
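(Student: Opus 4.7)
The lemma has two parts. The truncated matrix $C_\trm$ is, by construction, the central $N|I_r|\times N|I_r|$ principal submatrix of the infinite multilevel block Toeplitz matrix $\Cf$, whose symbol is the continuous map $\alpha\mapsto\widehat{C}^\alpha$ on the compact torus $Y^*$. Hence $M:=\sup_{\alpha\in Y^*}\|\widehat{C}^\alpha\|_2<\infty$; standard block-Toeplitz theory gives $\|\Cf\|_{\ell^2\to\ell^2}=M$, and passing to a principal submatrix cannot increase the operator norm, so $\|C_\trm\|_2\leq M$ uniformly in $r$. For $C_\frm$, the argument is variational: $C_\frm$ is symmetric and positive semi-definite, and its quadratic form can be written as $\langle u,C_\frm u\rangle=\langle\phi_u,\S_{\D_\frm}^{-1}\phi_u\rangle_{L^2(\p\D_\frm)}$ with $\phi_u=\sum_{i,m}u^{(i,m)}\chi_{\p D_i^m}$; using continuity and coercivity of $\S_{\D_\frm}$ on well-separated resonator configurations, with constants depending only on the geometry of a single cell and the minimum inter-resonator spacing, one obtains $\|C_\frm\|_2\leq K$ uniformly in $r$.

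For the normalized Frobenius convergence, expand
\begin{equation*}
|C_\frm-C_\trm|^2=\frac{1}{N|I_r|}\sum_{m,n\in I_r}\|C^{mn}_\frm(r)-C^{m-n}\|_F^2,
\end{equation*}
fix a parameter $\rho=\rho(r)$ with $\rho\to\infty$ and $\rho/r\to 0$, and split $I_r$ into the interior $I_r^{\mathrm{int}}:=\{m\in I_r:d(m,\p I_r)>\rho\}$ and the boundary layer $I_r^{\partial}:=I_r\setminus I_r^{\mathrm{int}}$. For pairs with $m\in I_r^{\partial}$ or $n\in I_r^{\partial}$, the uniform block bound $\|C^{mn}_\frm(r)-C^{m-n}\|_F\leq K'$ (inherited from the first part together with the analogous bound for $C_\trm$) makes the boundary contribution $O(|I_r^\partial|/|I_r|)=O(\rho/r)$ by the standard volume-surface estimate. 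For the interior contribution, exploit translation invariance of the capacitance coefficients under rigid shifts: $C^{m,m-k}_\frm(r)$ is the $(0,-k)$-block of the shifted finite lattice $\D_\frm(r)-m$, whose inradius around the origin is at least $\rho$ when $m\in I_r^{\mathrm{int}}$. A quantitative version of \Cref{thm:approx} then gives $\|C^{m,m-k}_\frm(r)-C^{-k}\|_F\leq\eta(\rho,|k|)$ with $\eta(\rho,\cdot)\to 0$ as $\rho\to\infty$, while the tail in $|k|$ is controlled by the Parseval identity $\sum_{k\in\Lambda}\|C^k\|_F^2=|Y^*|^{-1}\int_{Y^*}\|\widehat{C}^\alpha\|_F^2\,\dx\alpha<\infty$, which forces $\|C^k\|_F\to 0$ as $|k|\to\infty$. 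A dominated-convergence argument in $k$, together with a suitable choice of $\rho(r)$ growing slowly enough, then closes the argument.

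The main obstacle is promoting \Cref{thm:approx} from a pointwise statement in $(m,n)$ to one with a quantitative rate uniform over interior sites $m\in I_r^{\mathrm{int}}$, depending only on the inradius $\rho$. Such uniformity should come from inspecting the proof in \cite{ammari2023defect}, where convergence of the finite-array single layer potential to its quasi-periodic counterpart is established via Green's-function series; the decay of the series remainder provides the needed rate. Balancing this rate against the $O(\rho/r)$ loss in the boundary layer is the essential step that fixes the choice of $\rho(r)$.
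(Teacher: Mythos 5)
Your proposal takes a genuinely different route from the paper, which instead embeds $\D_\frm$ into a larger lattice $\tilde{\D}_\frm$, writes $\S_{\tilde{\D}_\frm}^{-1}$ as a Schur-complement block inverse, and derives the key product decay $\big|\int_{\p D_i^m}(\S_{\D_\frm}^{-1}-\S_{\tilde{\D}_\frm}^{-1})[\chi_{\p D_j^n}]\dx\sigma\big|\leq K_1(1+|m|)^{-1}(1+|n|)^{-1}$. As written, however, your argument has two genuine gaps. The first is the boundary-layer count: there are of order $|I_r^{\partial}|\cdot|I_r|$ pairs $(m,n)$ with $m\in I_r^{\partial}$ or $n\in I_r^{\partial}$, and the normalisation in $|\cdot|^2$ is only $1/(N|I_r|)$, so a uniform bound $\|C^{mn}_\frm-C^{m-n}\|_F\leq K'$ on individual blocks yields a boundary contribution of order $|I_r^{\partial}|\sim\rho r^{d-1}$, which diverges; it is not $O(|I_r^{\partial}|/|I_r|)$. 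To get your claimed $O(\rho/r)$ you need the row sums $\sum_{n}\|C^{mn}_\frm-C^{m-n}\|_F^2$ to be bounded uniformly in $m$ and $r$. Since the capacitance coefficients decay only like $|m-n|^{-1}$, the squared coefficients are summable for $d=1$ but give $\log r$ for $d=2$ and $r$ for $d=3$, so for screens and crystals the boundary layer cannot be closed by size estimates on the coefficients alone: one needs the \emph{differences} to decay with the positions of $m$ and $n$ relative to the truncation, which is exactly the product decay the paper extracts and whose square-sum is $o(r^d)$ in all of $d=1,2,3$.

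The second gap is that your interior estimate rests on a quantitative version of \Cref{thm:approx}, with a rate $\eta(\rho,|k|)$ uniform over interior sites, which you correctly flag as ``the main obstacle'' but do not supply. This is not a routine upgrade of the cited pointwise convergence; it is the analytic core of the lemma, and the paper's proof is essentially devoted to producing the equivalent estimate via the block inverse of the single layer potential and the decay of the harmonic functions $U$ and $V$. Until that uniform rate is established, your proposal reduces the lemma to an unproven statement of comparable difficulty. The remaining ingredients are fine or at least consistent with the paper: the Toeplitz bound for $\|C_\trm\|_2$ matches the paper's argument, the Parseval tail control is correct granted essential boundedness of $\alpha\mapsto\widehat{C}^\alpha$, and the variational sketch for $\|C_\frm\|_2$ is plausible --- though the $r$-independence of the coercivity constant of $\S_{\D_\frm}$ over growing unions of resonators is itself the nontrivial content of the result the paper cites from \cite{ammari2023defect}, not something that follows from a one-line appeal to well-separatedness.
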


\begin{proof}
	$\|C_\frm\|_2$ is uniformly bounded by \cite[Lemma 3.4]{ammari2023defect}, while $\|C_\trm\|_2$ is uniformly bounded since it is the Toeplitz matrix of an essentially bounded symbol.
	
	Let $\D_\frm$ denote the finite lattice of width $r$. \Cref{thm:approx} tells us how to extend this finite lattice $\D_\frm$ to a larger lattice $\tilde{\D}_\frm$, which has width $\tilde{r} > r$. In particular, \Cref{thm:approx} shows that we can make this extension in such a way that the corresponding ``$r$-sized'' block $\tilde{C}_{\frm,0}$ of the finite capacitance matrix corresponding to $\tilde{\D}_\frm$ is arbitrarily close to the ``$r$-sized'' truncated matrix $C_\trm$. That is, given an $\epsilon>0$, we can make a sufficiently large extension such that
	\begin{equation}\label{eq:cf0}
		\|C_\trm - \tilde{C}_{\frm,0}\|_2 < \epsilon.	
	\end{equation}
	Observe that  
	\begin{equation}\label{eq:Cdiff}
		\left| C_\frm - \tilde{C}_{\frm,0}\right|^2 = \frac{1}{N|I_r|}\sum_{\substack{m,n\in I_r\\ 1\leq i,j\leq N}} \left( \int_{\p D^m_i} \left(\S_{\D_f}^{-1} - \S_{\tilde{\D}_f}^{-1} \right)[\chi_{\p D^n_j}]\dx \sigma\right)^2.
	\end{equation}
	We define the ``tail'' $\D_0$ of the extended lattice as
	$$\D_0 = \tilde{\D}_\frm \setminus \D_\frm.$$
	Observe that we have a block-structure of the single-layer potential on the extended lattice:
	$$\S_{\tilde \D_\frm} = \begin{pmatrix} \S_{\D_f} & \S_{\D_0}|_{\D_\frm} \\ \S_{\D_\frm}|_{\D_0} &  \S_{\D_0}\end{pmatrix}.$$
	We then have a block inverse of $\S_{\tilde \D_\frm}$ as follows:
	\begin{equation}\label{eq:Siblock}
	\S_{\tilde \D_\frm}^{-1} =  \begin{pmatrix} \left(\S_{\D_\frm} - \S_{\D_0}|_{\D_\frm} \S_{\D_0}^{-1} \S_{\D_\frm}|_{\D_0}\right)^{-1}& A_1\\ A_2 &  A_3\end{pmatrix},
	\end{equation}
	where $A_i$ are bounded operators that are immaterial for our analysis. We are now ready to start estimating the difference between $C_\frm$ and $C_\trm$. We want to estimate the term
	$$\S_{\D_0}|_{\D_\frm} \S_{\D_0}^{-1} \S_{\D_\frm}|_{\D_0} \S_{\D_\frm}^{-1}[\chi_{\p D^n_j}].$$
	Define $U(x) = \S_{\D_\frm}\S_{\D_\frm}^{-1}[\chi_{\p D^n_j}]$ and $V = \S_{\D_0}\S_{\D_0}^{-1}[U|_{\p \D_0}]$; these functions satisfy the systems of equations 
	\begin{equation}
	\begin{cases}
		\Delta U = 0, \quad x\in \R^3 \setminus \D_\frm, \\
		U|_{\p \D_\frm} = \chi_{\p D^m_{i}},\\
		U(x) \sim \frac{1}{|x|},
	\end{cases}
	\quad\text{and}\quad
	\begin{cases}
		\Delta V = 0, \quad &x\in \R^3 \setminus \D_0, \\
		V(x) = U(x), & x\in \p \D_0, \\
		V(x) \sim \frac{1}{|x|}.
	\end{cases}
	\end{equation}
	Observe that the boundary conditions satisfied by $U$ are imposed on $\p \D_\frm$ while the boundary conditions satisfied by $V$ are imposed on $\p \D_0$. In particular, $U(x)$ scales like $|m|^{-1}$ for $x\in \p \D_0$ while $V(x)$ scales like $(|m||n|)^{-1}$ for $x \in \p D^n$. As $r\to \infty$, we therefore have 
	$$\int_{\p D^m_i} \left(\S_{\D_f}^{-1} - \S_{\tilde{\D}_f}^{-1} \right) [\chi_{\p D^n_j}]\dx \sigma \leq \frac{K_1}{(1+|m|)(1+|n|)},$$
	for some constant $K_1$. From \eqref{eq:Cdiff} and \eqref{eq:Siblock}, we use the Neumann series to find that 
	\begin{align*}
		\left| C_\frm - \tilde{C}_{\frm,0}\right|^2 &\leq  \frac{K_2}{r^d}\sum_{m,n\in I_r} \frac{1}{(1+|m|)^2(1+|n|)^2} \\
		&\leq  K_3r^{d-4},
	\end{align*}
	where $d\in \{1,2,3\}$. In other words, $\left| C_\frm - \tilde{C}_{\frm,0}\right| \to 0,$  which together with \eqref{eq:cf0} concludes the proof.
\end{proof}

From \cite{gray1972asymptotic} we know that asymptotically equivalent matrices have identical eigenvalue distributions as their sizes tend to infinity. This gives the following result on distributional convergence of the discrete density of states $D_\frm(\omega)$ to the continuous density of states $D(\omega)$.
\begin{thm}\label{thm:dist}
	As $r\to \infty$, $D_\frm(\omega)$ converges to $D(\omega)$ in the sense of distributions. In other words, for any smooth function $F$ with compact support, we have 
	$$\lim_{r\to \infty}\int_{-\infty}^{\infty}  D_\frm(\omega) F(\omega) \dx \omega = \int_{-\infty}^\infty D(\omega) F(\omega)  \dx \omega.$$
\end{thm}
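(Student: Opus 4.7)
The plan is to combine the asymptotic equivalence of \Cref{lem:ae} with two classical ingredients. First, Gray's theorem on asymptotically equivalent Hermitian matrices \cite{gray1972asymptotic}, which propagates normalized-Frobenius convergence to equality of limiting spectral distributions. Second, the multilevel Szeg\H{o} limit theorem for block Toeplitz matrices \cite{tyrtyshnikov1998spectra, tilli1998note}, which evaluates the limiting distribution of $C_\trm$ as a Brillouin-zone integral of the quasi-periodic spectrum.

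Fix an arbitrary test function $F \in C^\infty_c(\R)$. By \Cref{lem:ae}, the matrices $C_\frm$ and $C_\trm$ are Hermitian with uniformly bounded operator norms, so their eigenvalues $\omega_i^2 \geq 0$ all lie in a single compact interval on which $g(\lambda) := F(\sqrt{\lambda})$ is continuous. Applying Gray's theorem to the asymptotic equivalence $|C_\frm - C_\trm| \to 0$ then gives
\begin{equation}
\lim_{r\to\infty} \left(\frac{1}{N|I_r|}\sum_{i=1}^{N|I_r|} g(\lambda_i(C_\frm)) - \frac{1}{N|I_r|}\sum_{i=1}^{N|I_r|} g(\lambda_i(C_\trm)) \right) = 0.
\end{equation}

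Next, I would use the Fourier inversion \eqref{eq:Crealspace} to identify $C_\trm$ with a multilevel block Toeplitz matrix of size $N|I_r|$ whose $N\times N$ matrix-valued symbol is $\widehat{C}^\alpha$ on the Brillouin zone $Y^*$. The multilevel Szeg\H{o} limit theorem then yields
\begin{equation}
\lim_{r\to\infty} \frac{1}{N|I_r|}\sum_{i=1}^{N|I_r|} g(\lambda_i(C_\trm)) = \frac{1}{|Y^*|}\int_{Y^*}\frac{1}{N}\sum_{k=1}^N g\bigl(\widehat{\omega}_k(\alpha)^2\bigr)\dx\alpha.
\end{equation}
Since $g(\widehat{\omega}_k^2) = F(\widehat{\omega}_k)$ and $g(\lambda_i(C_\frm)) = F(\omega_i)$, these two averages are, up to the normalization factor relating $|Y^*|$ and $(2\pi)^d$ in \eqref{eq:D}, precisely the distributional pairings $\int D_\frm F\dx\omega$ and $\int D F\dx\omega$; combining the two displays above yields the desired convergence.

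The chief obstacle is the technical heart already established in \Cref{lem:ae}: the convergence is necessarily weak rather than strong in norm, because the edge effects in $C_\frm$ persist independent of $r$. Beyond that lemma, two routine technical points remain: verifying that the sequence of balls $(I_r)_{r>0}$ is a Van Hove exhaustion of $\Lambda$ (boundary points form an $o(|I_r|)$ subset) so that the multilevel Szeg\H{o} theorem extends from box to ball truncations, and checking that the symbol $\widehat{C}^\alpha$ is essentially bounded on $Y^*$, which follows from the uniform operator norm bound in \Cref{lem:ae} via the standard correspondence between bounded Toeplitz matrices and $L^\infty$ symbols.
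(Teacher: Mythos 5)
Your proposal is correct and follows essentially the same route as the paper: asymptotic equivalence from \Cref{lem:ae} combined with Gray's theory and the Szeg\H{o}-type limit for the multilevel block Toeplitz truncation $C_\trm$. In fact you spell out two details the paper's two-line proof leaves implicit --- the substitution $g(\lambda)=F(\sqrt{\lambda})$ to pass between eigenvalues $\omega^2$ and frequencies $\omega$, and the need to extend the multilevel Szeg\H{o} theorem from box to ball truncations $I_r$ --- both of which are handled correctly.
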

\begin{proof}
	Since $C_\trm$ and $C_\frm$ are asymptotically equivalent, we have from \cite{gray1972asymptotic} that 
	$$\lim_{r\to \infty} \frac{1}{N|I_r|}\sum_{i=1}^{N|I_r|} F(\omega_i) = \frac{1}{(2\pi)^d} \int_{Y^*}\frac{1}{N}\sum_{k=1}^N F\bigl(\widehat{\omega}_k(\alpha)\bigr)\dx \alpha,$$
	from which the theorem follows.
\end{proof}

From \Cref{thm:dist} we have that the frequencies $\omega_{i}, i =1,...,N|I_r|,$ of the finite capacitance matrix are distributed according to 
$$\omega_{i} \sim \widehat{\omega}_{k}(\alpha),$$
where $\alpha$ is uniformly distributed on the Brillouin zone $Y^*$ and $\widehat{\omega}_{k}(\alpha), k = 1,...,N,$ are the eigenvalues of the quasi-periodic capacitance matrix. The proportion of modes with eigenfrequencies in the infinitesimal interval between $\omega$ and $\omega+\dx \omega$ is then approximated by
$$
	D_\frm(\omega) \dx \omega \approx \frac{\dx \omega }{(2\pi)^d}\int_{L(\omega)} \frac{1}{N}\sum_{k=1}^N\frac{1}{|\nabla_\alpha\omega_k(\alpha)|}\dx \sigma(\alpha).
$$ 
For a one-dimensional chain of single resonators ($d=1, N=1$) we obtain
\begin{equation} \label{eq:DOS1d}
D_\frm(\omega) \dx \omega \approx \frac{\dx \omega }{2\pi}\frac{1}{|\widehat{\omega}'(\alpha)|}\Bigg|_{\alpha \in \alpha(\omega)},
\end{equation}
where $\alpha(\omega) = \{\alpha\in Y^* \mid \widehat{\omega}_1(\alpha) = \omega\}$, which is shown by the solid lines in \Cref{fig:hists1d}.

\subsection{Numerical results}

The convergence of the distribution of the resonant frequencies can be studied numerically. In \Cref{fig:convergence1d} we plot histograms of the discrete, finite set of subwavelength resonant frequencies for truncated structures and compare this to the density of states (DOS) for the infinite array, given by \eqref{eq:DOS1d}. In each case, the histograms and the DOS are normalised so that the area under the curves is equal to 1. We can see that the distribution of the truncated eigenvalues closely resembles the DOS as the structure becomes sufficiently large (for 1000 resonators, the curve is difficult to distinguish from the histogram in our plot). We can quantify the error by computing the area between the two curves (the histograms being viewed as curves for these purposes). This is shown in the lower plot of \Cref{fig:convergence1d} and we observe linear convergence as the size of the array increases. In other words, the discrete DOS of the truncated resonant frequencies is converging in distribution to the DOS, at a linear rate.

\begin{figure}
	\centering
	\begin{subfigure}{\linewidth}
		\centering
		\includegraphics[width=\linewidth]{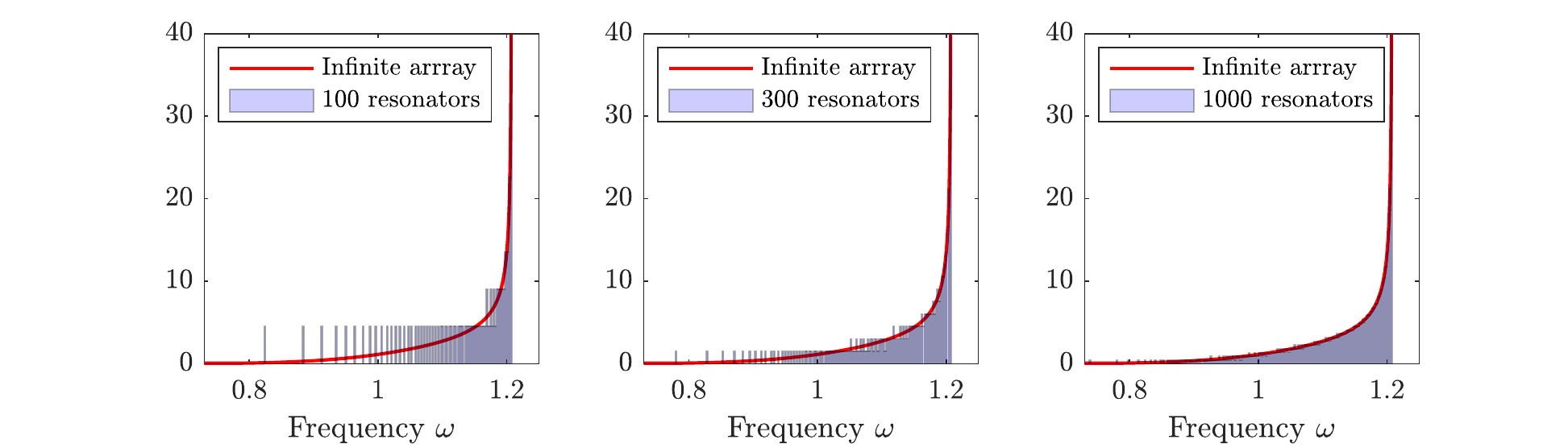}
		\caption{} \label{fig:hists1d}
	\end{subfigure}
	
	\vspace{0.3cm}
	
	\begin{subfigure}{\linewidth}
		\centering
		\includegraphics[width=0.65\linewidth]{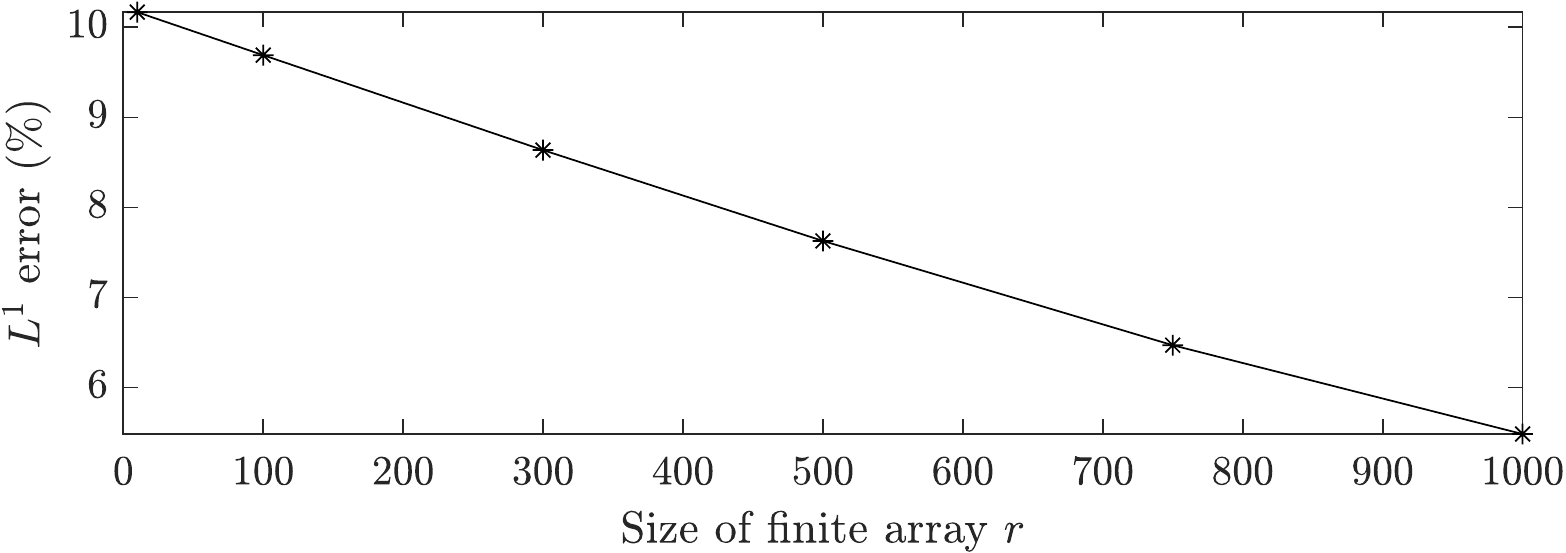}
		\caption{}
	\end{subfigure}
	\caption{Convergence in distribution of the resonant frequencies of the truncated linear array to the density of states (DOS) for the infinite array. (a) The resonant frequencies of the truncated arrays are shown in histograms and the DOS for the infinite array \eqref{eq:dos} as a solid red line. Both the histograms and the DOS have been normalised to have unit area under the curves. (b) The $L^1$ error between the histogram plots and the DOS curve, which converges to zero as the size of the finite array increases.} \label{fig:convergence1d}
\end{figure}

Similar histograms can be produced for multi-dimensional lattices. For example, in \Cref{fig:convergence2d} we plot the same histograms for successively larger square (two-dimensional) lattices. Once again, we can see that the distribution of the eigenfrequencies converges to a fixed distribution as the size of the finite lattice increases. One notable difference from the one-dimensional lattice shown in \Cref{fig:convergence1d} is that the distribution is not singular at the edge of the first band. 

\begin{figure}
	\centering
	\includegraphics[width=\linewidth]{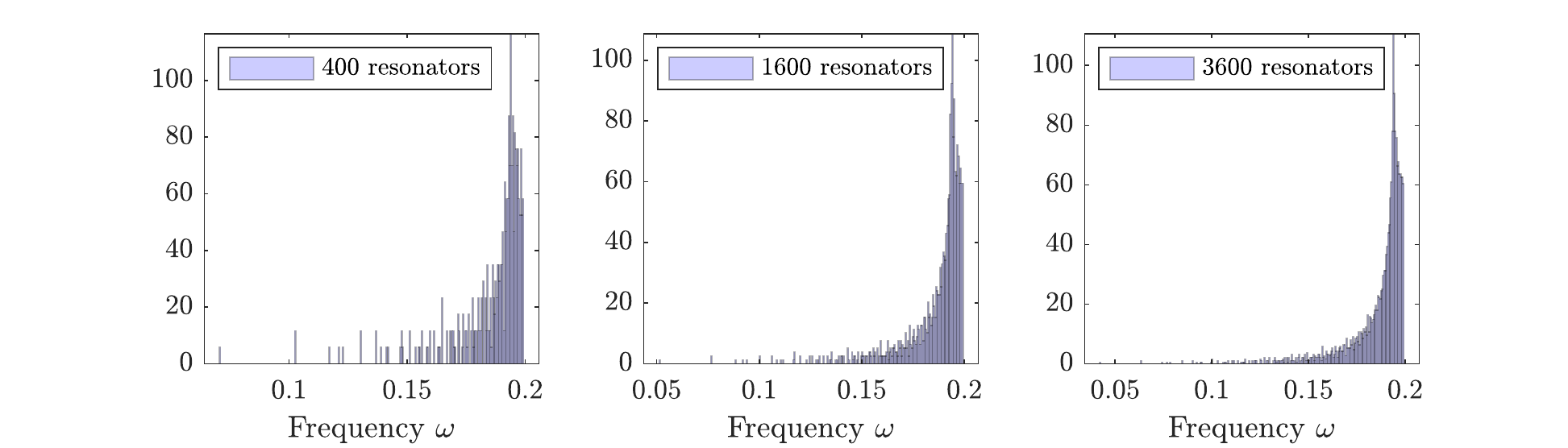}
	\caption{Distribution of the resonant frequencies of a truncated square (two-dimensional) array, plotted as histograms.} \label{fig:convergence2d}
\end{figure}

\section{Pointwise convergence and discrete band structure} \label{sec:bands}
Since any edge effects persist in the limit $r\to\infty$, we should not expect all eigenvalues of $C_\frm$ to converge to the spectrum of $\Cf$.  Nevertheless, for any point in the continuous spectrum, there will always be eigenvalues arbitrarily close. We can repeat the arguments used in the proof of \Cref{lem:ae} to obtain the following theorem on pointwise convergence of the essential spectrum and Bloch modes.
\begin{thm}\label{thm:pwconv}
	Let $\omega = \widehat{\omega}_k(\alpha)$ be an eigenvalue of the quasi-periodic capacitance matrix $C^\alpha$ for some $\alpha \in Y^*$, corresponding to the normalized eigenvector $u\in \R^N$. Then, for any $\epsilon>0$ we can choose $r>0$ such that the finite capacitance matrix $C_\frm(r)$ has a family of eigenvalues $\omega_i^2, i \in I$ and associated normalized eigenvectors $u_i \in \R^{N|I_r|}$ satisfying
	$$|\omega^2-\omega_i^2| < \epsilon, \qquad \left\|\tilde{u}-\sum_{i\in I}c_iu_i\right\|_2 < \epsilon,$$
	for some $c_i \in \Cb$, 	where $\tilde{u}\in \R^{N|I_r|}$ is the normalized quasi-periodic extension of $u$.
\end{thm}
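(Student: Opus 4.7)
The plan is to use the approximate-eigenvector method: produce a unit vector $\tilde u \in \Cb^{N|I_r|}$ for which $\|(C_\frm - \omega^2 I)\tilde u\|_2$ is small, and then extract nearby eigenpairs of the Hermitian matrix $C_\frm$ via the spectral theorem. The natural candidate is the normalized Floquet--Bloch extension
\[
\tilde u^m = \frac{1}{\sqrt{|I_r|}}\, u\, e^{-\iu\alpha\cdot m},\qquad m\in I_r,
\]
which has unit $\ell^2$ norm. The argument then splits into two convergence statements followed by a spectral conclusion.

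First I would show $\|(C_\trm - \omega^2 I)\tilde u\|_2\to 0$. A direct computation from \eqref{eq:Crealspace} shows that the infinite block-Toeplitz matrix $\Cf$ satisfies $\Cf \tilde v = \omega^2 \tilde v$ for the infinite quasi-periodic vector $\tilde v^m = u\, e^{-\iu\alpha\cdot m}$. Restricting to $I_r$, the error $(C_\trm - \omega^2 I)\tilde u$ at block row $m$ is exactly the truncated tail $-\sum_{n\in\Lambda\setminus I_r} C^{m-n}\, u\, e^{-\iu\alpha\cdot n}/\sqrt{|I_r|}$. Because the Bloch symbol $\widehat C^\alpha$ is smooth in $\alpha$, the real-space coefficients $\|C^k\|$ decay, so this tail at row $m$ is controlled by a quantity vanishing with $\mathrm{dist}(m,\p I_r)$. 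Since most blocks $m\in I_r$ lie well inside the lattice (a surface-to-volume count), the full $\ell^2$ norm of the error tends to zero.

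Second, I would estimate $\|(C_\frm - C_\trm)\tilde u\|_2$. This is the delicate step, since the averaged-norm convergence furnished directly by \Cref{lem:ae} is by itself too weak in dimensions $d\geq 2$ to conclude that $C_\frm - C_\trm$ is small in any operator sense. However, revisiting the proof of \Cref{lem:ae} yields pointwise entry-level bounds for $C_\frm - \tilde C_{\frm,0}$ via the Schur-complement identity \eqref{eq:Siblock}, and \Cref{thm:approx} (letting the auxiliary lattice grow) converts these into a pointwise entry-level bound for $C_\frm - C_\trm$. Paired with the uniform distribution of mass $|I_r|^{-1/2}$ in $\tilde u$ and the oscillatory cancellations in the phase factors $e^{-\iu\alpha\cdot n}$, a careful evaluation of $\|(C_\frm - C_\trm)\tilde u\|_2^2$ shows vanishing in the limit. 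I expect this to be the main obstacle of the proof.

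Combining the two steps, for any $\eta>0$ we may choose $r$ large so that $\|(C_\frm - \omega^2 I)\tilde u\|_2 < \eta$. Expanding $\tilde u = \sum_i c_i u_i$ in the orthonormal eigenbasis of $C_\frm$ gives $\sum_i (\omega_i^2 - \omega^2)^2 |c_i|^2 < \eta^2$, and Chebyshev's inequality then yields $\sum_{|\omega_i^2-\omega^2|\geq \epsilon} |c_i|^2 \leq \eta^2/\epsilon^2$. Setting $I = \{i : |\omega_i^2 - \omega^2| < \epsilon\}$ and choosing $\eta \leq \epsilon^2$ simultaneously gives $|\omega^2 - \omega_i^2|<\epsilon$ for $i\in I$ and $\|\tilde u - \sum_{i\in I} c_i u_i\|_2 \leq \eta/\epsilon \leq \epsilon$, completing the proof.
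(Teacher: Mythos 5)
The paper offers no proof of this theorem beyond the remark that one can ``repeat the arguments used in the proof of \Cref{lem:ae}'', so your skeleton --- test the Hermitian matrix $C_\frm$ against the normalized quasi-periodic extension $\tilde u$, bound the residual $\|(C_\frm-\omega^2 I)\tilde u\|_2$, and extract the family $I$ by expanding in the orthonormal eigenbasis and applying Chebyshev --- is the natural (and presumably intended) instantiation, and your final spectral step is correct and complete, including the choice $\eta\leq\epsilon^2$ and the fact that $I$ is then nonempty. The gaps are in the two residual estimates, and one rests on a false premise. The Bloch symbol $\widehat C^\alpha$ is \emph{not} smooth on $Y^*$: it inherits the singularity of the quasi-periodic Green's function at $\alpha=0$, and the real-space coefficients $C^k$ decay only like $|k|^{-1}$ --- this is exactly the long-range $1/\mathrm{dist}$ coupling the paper emphasises in its introduction. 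Consequently $\sum_{n\in\Lambda\setminus I_r}\|C^{m-n}\|$ is not absolutely convergent even for fixed interior $m$, so the tail of $(C_\trm-\omega^2I)\tilde u$ cannot be controlled by the triangle inequality as you suggest; you need summation by parts exploiting the oscillation of $e^{-\iu\alpha\cdot n}$ for $\alpha\neq 0$, or a Parseval-type argument using that the symbol is essentially bounded, with $\alpha=0$ and band edges handled separately.

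The estimate $\|(C_\frm-C_\trm)\tilde u\|_2\to0$ is, as you correctly identify, the crux, but what you offer there is a promissory note rather than an argument. Quantitatively: the entrywise bound extractable from the proof of \Cref{lem:ae} is of the form $K(1+|m|)^{-1}(1+|n|)^{-1}$, and feeding it into
\begin{equation*}
\|(C_\frm-\tilde C_{\frm,0})\tilde u\|_2^2\;\lesssim\;\frac{1}{|I_r|}\sum_{m\in I_r}\frac{1}{(1+|m|)^{2}}\Bigl(\sum_{n\in I_r}\frac{1}{1+|n|}\Bigr)^{2}
\end{equation*}
gives $O((\log r)^2/r)$ for $d=1$ but $O(\log r)$ for $d=2$ and $O(r^{2})$ for $d=3$. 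So for a chain your argument closes with no cancellation needed, whereas for screens and crystals the ``oscillatory cancellations in the phase factors'' carry the entire weight of the theorem, and nothing in \Cref{lem:ae} or \Cref{thm:approx} supplies them: \Cref{lem:ae} controls only the normalized Frobenius norm, and \Cref{thm:approx} is an entrywise limit for fixed $m,n$. Until that cancellation (or a sharper, boundary-localized entrywise bound) is actually established, the proof is complete only for $d=1$.
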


\begin{rmk}
	Although any eigenvalue and eigenvector of the quasi-periodic capacitance matrix can be approximated by eigenvalues and eigenvectors of the finite capacitance matrix, the converse need not hold. Indeed, due to edge effects, $C_\frm$ might have eigenvalues which do not approach those of $C_\trm$ in the limit $r\to \infty$. Although $C_\frm$ converges to $C_\trm$ in the (weak) norm $| \cdot |$, it does not converge in the (strong) Euclidean operator norm.  
\end{rmk}

The discrete band function calculation introduced in \cite{ammari2023defect} provides a notion of how well an eigenmode of $C_\frm$, for large $r$, is approximated by Bloch modes of the infinite structure. Given an eigenmode $u_j$, we can take the truncated Floquet transform  of $u_j$ as
\begin{equation} \label{def:truncFloquet}
	(\widehat{u}_j)_\alpha = \sum_{m\in I_r} (u_j)_me^{\iu \alpha\cdot m}, \qquad \alpha \in  Y^*.
\end{equation}
Here we denote by $(u_j)_m$ the vector of length $N$ associated to cell $m\in \Lambda$. Observe that $u_j$ is a vector of length $N|I_r|$ while $(\widehat{u}_j)_\alpha $ is a vector of length $N$. Looking at the Euclidean 2-norm $\|(\widehat{u}_j)_\alpha\|_2$ as a function of $\alpha$, this function has distinct peaks, which are the quasi-periodicities $\alpha$ associated with $u_j$. An example is shown in \Cref{fig:alpha}. We can then define a discrete band structure whereby the eigenvalues $\omega_j$ of $C_\frm$ are associated to a quasi-periodicity $\alpha_j$ given by
\begin{equation} \label{eq:quasip}
	\alpha_j = \argmax_{\alpha \in Y^*} \|(\widehat{u}_j)_\alpha\|_2.
\end{equation}
Note that the symmetry of the problem means that if $\alpha$ is an approximate quasi-periodicity then so will $-\alpha$ be. In cases of additional symmetries of the lattice, we expect additional symmetries of the quasi-periodicities. 

As a demonstrative example of this process, we consider the case of a single resonator repeating in one direction. This has a single continuous band of eigenfrequencies, as shown in \Cref{fig:band}. For a truncated version of the structure, the approximate band functions can be reconstructed as above. In \Cref{fig:alpha} we show the norm of the truncated Floquet transform of the 10\textsuperscript{th}, 20\textsuperscript{th} and 30\textsuperscript{th} eigenmodes in an array of 50 resonators. In each case the function is even about zero and has a clear peak, which allows us to identify an appropriate quasi-periodicity $\alpha_j$ via \eqref{eq:quasip}. These values can be used to plot the 50 resonant frequencies alongside the continuous bands of the limiting infinite structure, which is shown in \Cref{fig:band}. We see that, even for a set of 50 resonators, the approximate band function closely resembles that of the infinite structure.

\begin{figure}
\centering
\includegraphics[width=0.8\linewidth]{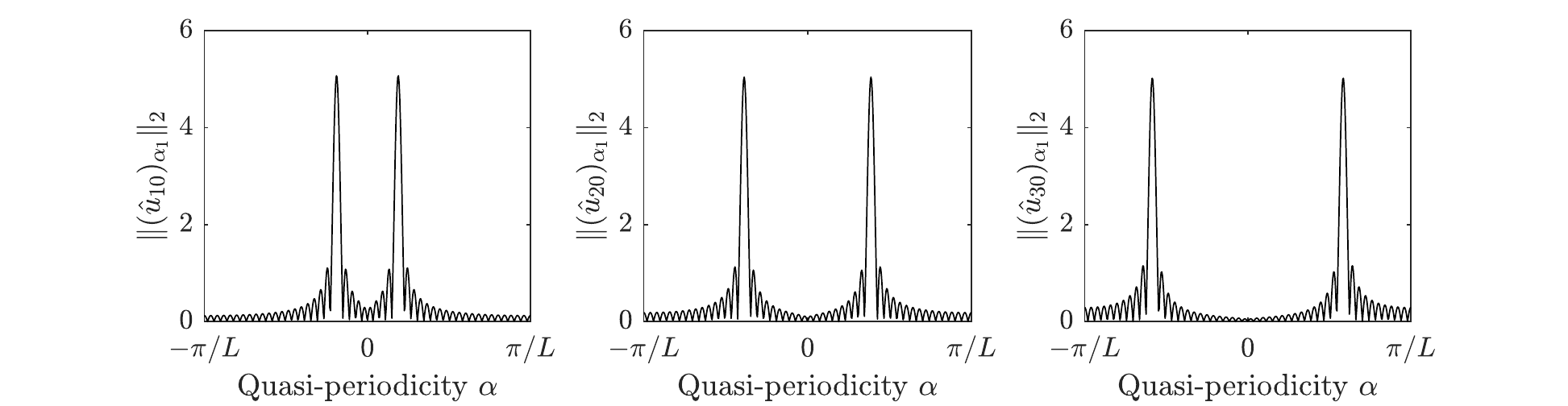}
\caption{A given eigenmode $u_j$ of a truncated periodic structure can be associated with a quasi-periodicity $\alpha$. Here, we plot the norm of the truncated Floquet transform of the 10\textsuperscript{th}, 20\textsuperscript{th} and 30\textsuperscript{th} eigenmodes of an array of 50 resonators. In each case, there are clear peaks that we can assign as the quasi-periodicities of the eigenmodes. These values can be used to reconstruct approximate spectral band functions.} \label{fig:alpha}
\end{figure}

\begin{figure}
	\begin{subfigure}{\linewidth}
	\includegraphics[width=0.55\linewidth]{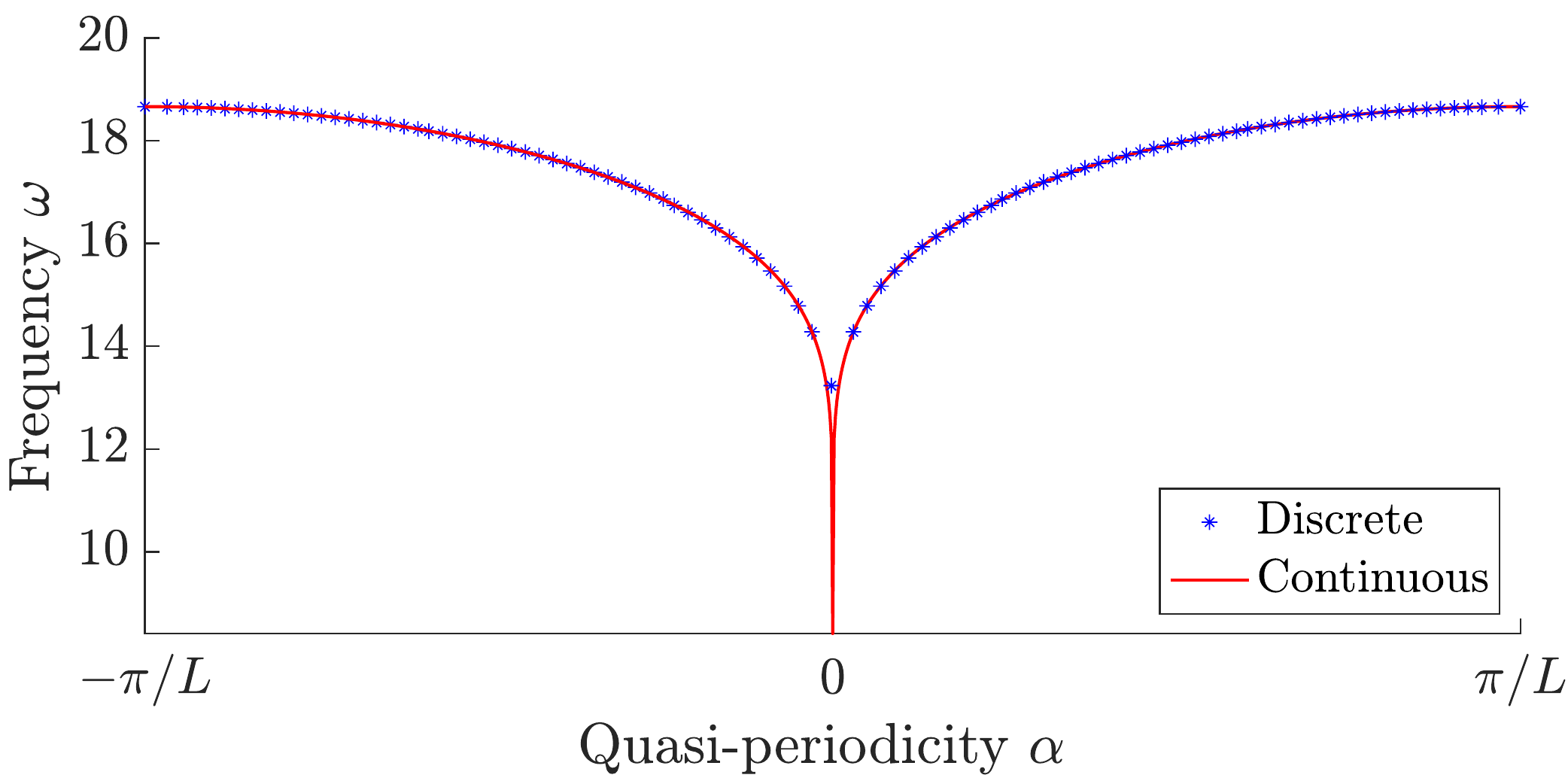}
	\begin{tikzpicture}
	\node at (0,0) {$\cdots$};
	\draw[fill=white!80!gray] (0.5,0) circle(0.2);
	\draw[fill=white!80!gray] (1.2,0) circle(0.2);
	\draw[fill=white!80!gray] (1.9,0) circle(0.2);
	\draw[fill=white!80!gray] (2.6,0) circle(0.2);
	\draw[fill=white!80!gray] (3.3,0) circle(0.2);
	\draw[fill=white!80!gray] (4,0) circle(0.2);
	\draw[fill=white!80!gray] (4.7,0) circle(0.2);
	\node at (5.25,0) {$\cdots$};
	\node[white] at (0,-2.5) {.};
	\end{tikzpicture}
	\caption{Single periodic resonators ($N=1$)} \label{fig:band}
	\end{subfigure}
	
	\vspace{0.2cm}
	
	\begin{subfigure}{\linewidth}
	\includegraphics[width=0.55\linewidth]{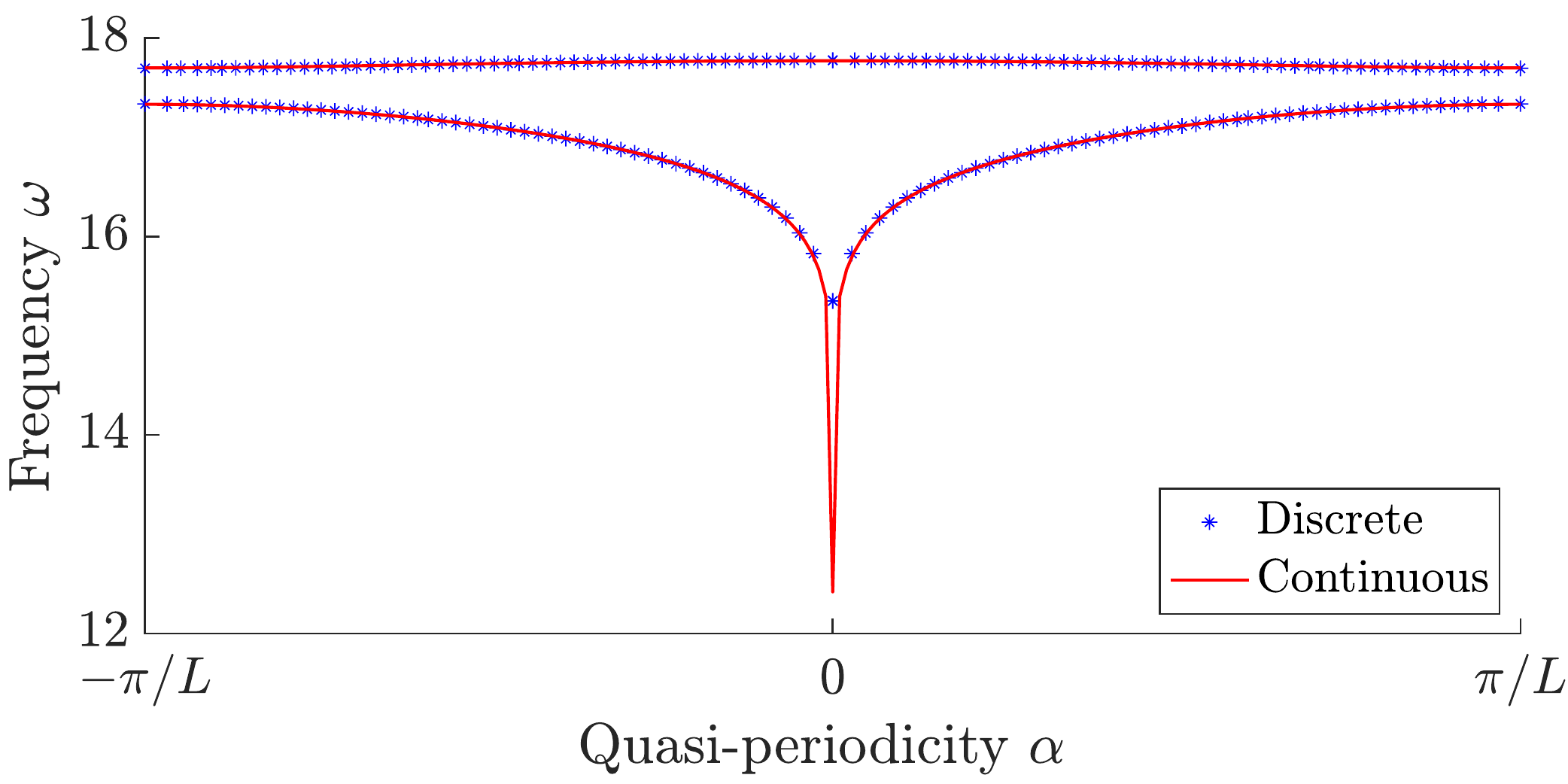}
	\begin{tikzpicture}
		\node at (0,0) {$\cdots$};
		\draw[fill=white!80!gray] (0.6,0) circle(0.2);
		\draw[fill=white!80!gray] (1.1,0) circle(0.2);
		\draw[fill=white!80!gray] (2,0) circle(0.2);
		\draw[fill=white!80!gray] (2.5,0) circle(0.2);
		\draw[fill=white!80!gray] (3.4,0) circle(0.2);
		\draw[fill=white!80!gray] (3.9,0) circle(0.2);
		\node at (4.5,0) {$\cdots$};
		\node[white] at (0,-2.5) {.};
	\end{tikzpicture}
	\caption{Periodic pairs of resonators ($N=2$)} \label{fig:band_dimer}
	\end{subfigure}
	\caption{The continuous spectrum of the infinite structure and the discrete spectrum of the truncated structure for a one-dimensional lattices. (a) Single periodic resonators ($N=1$) with a truncated structure consisting of 50 resonators. (b) Periodic pairs of resonators ($N=2$) with a truncated structure containing 100 resonators. In both cases, the truncated Floquet transform \eqref{def:truncFloquet} is used to approximate the quasi-periodicity of the truncated modes.} \label{fig:bands_1d}
\end{figure}

\begin{figure}
		\begin{subfigure}{\linewidth}
		\includegraphics[width=0.55\linewidth]{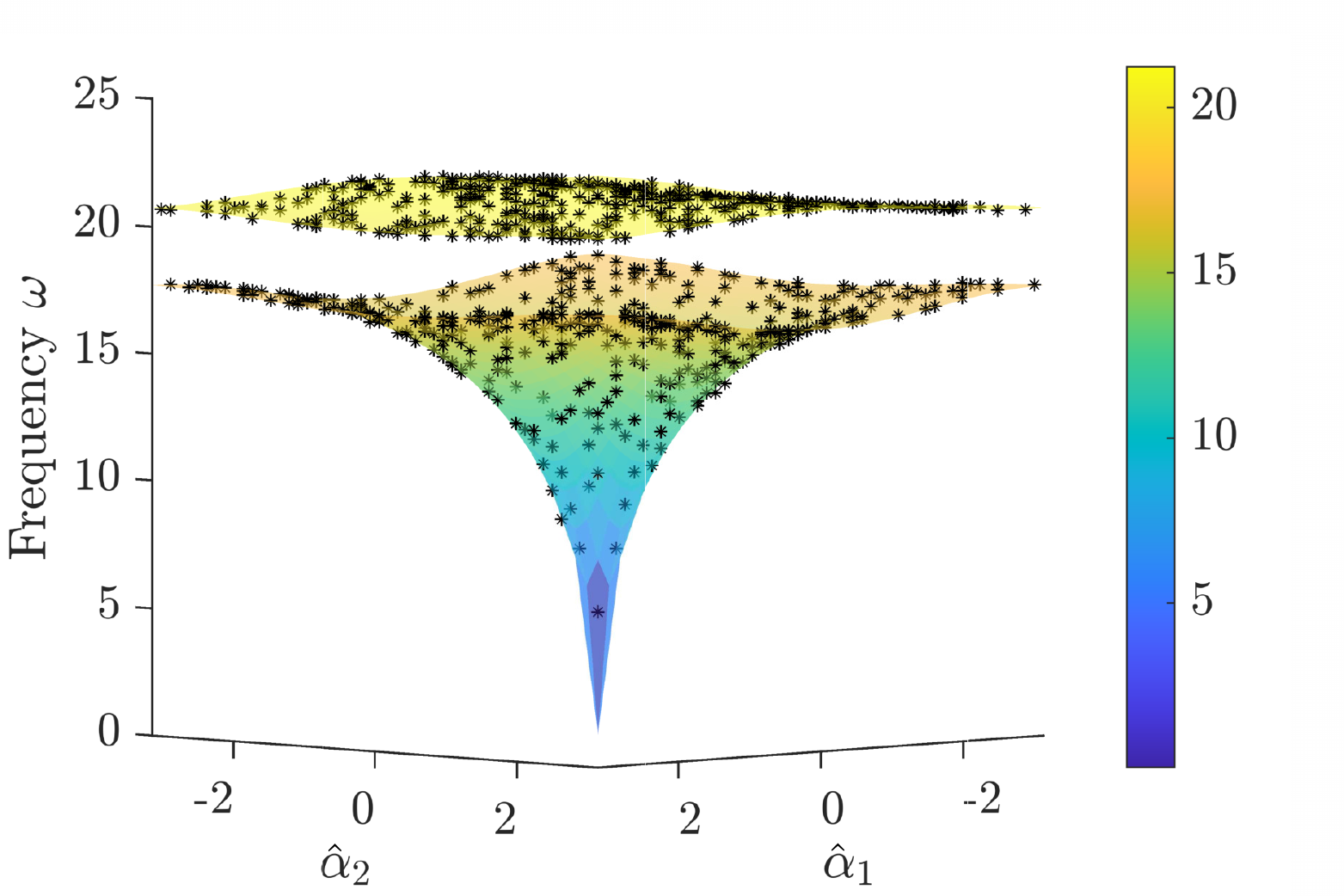}
		\begin{tikzpicture}
			\foreach \x in {1,2,...,5}{
				\foreach \y in {1,2,...,5}{
					\draw[fill=white!80!gray] (0.7*\x+0.15,0.7*\y) circle(0.12);
					\draw[fill=white!80!gray] (0.7*\x-0.15,0.7*\y) circle(0.12);
					\node at (0,0.7*\y) {$\cdots$};
					\node at (4.3,0.7*\y) {$\cdots$};
					\node at (0.7*\x,0.1) {$\vdots$};
					\node at (0.7*\x,4.3) {$\vdots$};
			}}
			\node[white] at (-1,-0.75) {.};
		\end{tikzpicture}
		\caption{Square lattice} \label{fig:band_square}
		\end{subfigure}
	\vspace{0.2cm}
	
	\begin{subfigure}{\linewidth}
		\includegraphics[width=0.55\linewidth]{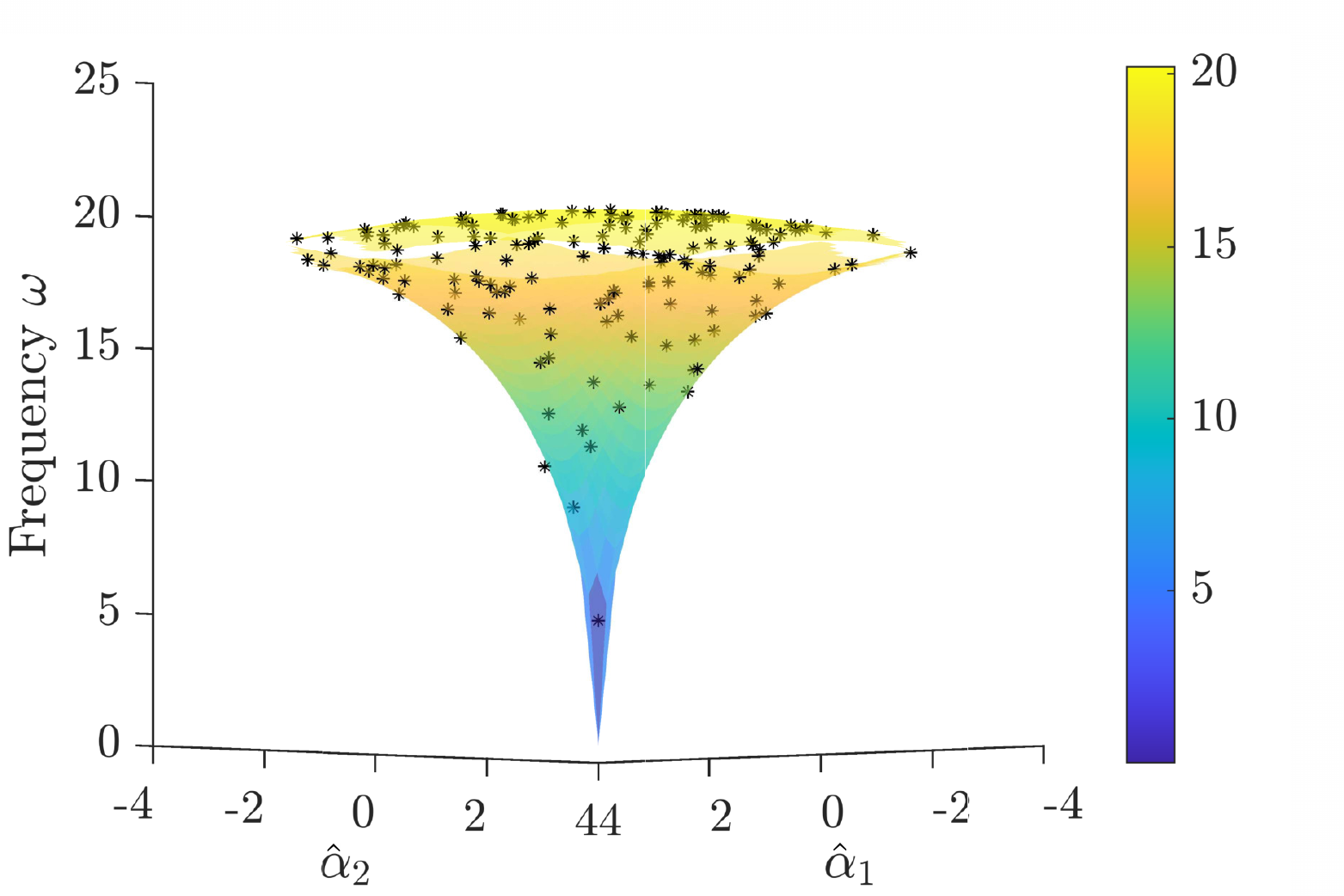}
		\begin{tikzpicture}[scale=0.8]
			\foreach \x in {1,2,...,4}{
				\foreach \y in {1,2,...,4}{
					\draw[fill=white!80!gray] (1.4*\x+0.7*\y,1.2124*\y) circle(0.17);
					\draw[fill=white!80!gray] (1.4*\x+0.7*\y+0.7,1.2124*\y+0.4041) circle(0.17);
					\node at (0.7*\y+0.9,1.2124*\y+0.25) {$\cdots$};
					\node at (0.7*\y+0.88+6,1.2124*\y+0.25) {$\cdots$};
					
					\node[rotate=60] at (0.7+1.4*\x,0.7) {$\cdots$};
					\node[rotate=60] at (3.5+1.4*\x,5.8) {$\cdots$};
			}}
			\node[white] at (1.6,-0.75) {.};
		\end{tikzpicture}		
		\caption{Honeycomb lattice} \label{fig:band_honeycomb}
	\end{subfigure}
	\caption{Examples of continuous and discrete spectra of the infinite and truncated structures, respectively. (a) A square lattice with two resonators per unit cell, resulting in two bands separated by a gap. (b) A honeycomb lattice with Dirac cones at the vertices of the Brillouin zone. In both cases, the truncated structures have 800 resonators and the truncated Floquet transform is used to approximate the quasi-periodicity of the truncated modes.} \label{fig:bands_2d}
\end{figure}

In \Cref{fig:band_dimer}, we compare the continuous and truncated spectra of an array of resonators arranged in pairs (dimers). The truncated structure has 100 resonators arranged in 50 pairs. This geometry is an example of the famous Su-Schrieffer-Heeger (SSH) chain \cite{SSH} which has been shown to have fascinating topological properties \cite{ammari2019topological}. This system has two subwavelength spectral bands and the truncated modes are split evenly between approximating the two bands.

Additionally, we can consider this method for lattices of higher dimension. \Cref{fig:band_square} shows the case of a square lattice of resonator dimers. Similarly to \Cref{fig:band_dimer}, there is a band gap between the first and the second bands, and we see a close agreement between the discrete and the continuous band structure. \Cref{fig:band_honeycomb} shows a similar figure in the case of a honeycomb lattice, where the finite lattice is truncated along zig-zag edges of the lattice. As shown in \cite{ammari2020honeycomb}, there are Dirac cones on each corner of the Brillouin zone. In the truncated structure, in addition to the ``bulk modes'' whose frequencies closely agree with the continuous spectrum, there are ``edge modes'' which are localized around the edges and whose points in the band structure lie away from the continuous bands.

\section{Nonperiodic band structure and topological invariants} \label{sec:nonperiodic}
The method described in \Cref{sec:bands} can also be applied to aperiodic structures that have been perturbed to introduce defects. Two examples of this are shown in \Cref{fig:bands_defects}. In each case, the defects have induced localised eigenmodes which do not have well-defined associated quasi-periodicities. These are shown with dashed lines. The rest of the truncated spectrum still agrees well with the continuous spectrum of the limiting infinite operator.

The example shown in \Cref{fig:pointdefect} is that of a local defect, where the material parameters are changed on the central resonator. As was studied in \cite{anderson, ammari2023defect}, this corresponds to multiplying the capacitance matrix by a diagonal matrix that is equal to the identity matrix other than a value greater than 1 in the central entry. A formula for the eigenfrequency of this defect mode in the infinite structure was derived in \cite{anderson} and it was proved in \cite{ammari2023defect} that the eigenfrequencies of the localised modes in the truncated arrays converges to that value.

In \Cref{fig:SSHdefect} we study the famous example of a interface mode in the Su-Schrieffer-Heeger (SSH) chain. This localised eigenmode exhibits enhanced robustness with respect to imperfections, a property it inherits from the underlying topological properties of the periodic structure via the notion of \emph{topological protection} \cite{ammari2019topological}. Even though the periodicity of the structure is broken due to the interface, we can still visualise the spectrum as a discrete band structure through \eqref{def:truncFloquet} and \eqref{eq:quasip}.

\begin{figure}
	\centering
	\begin{subfigure}{\linewidth}
	\includegraphics[width=0.55\linewidth]{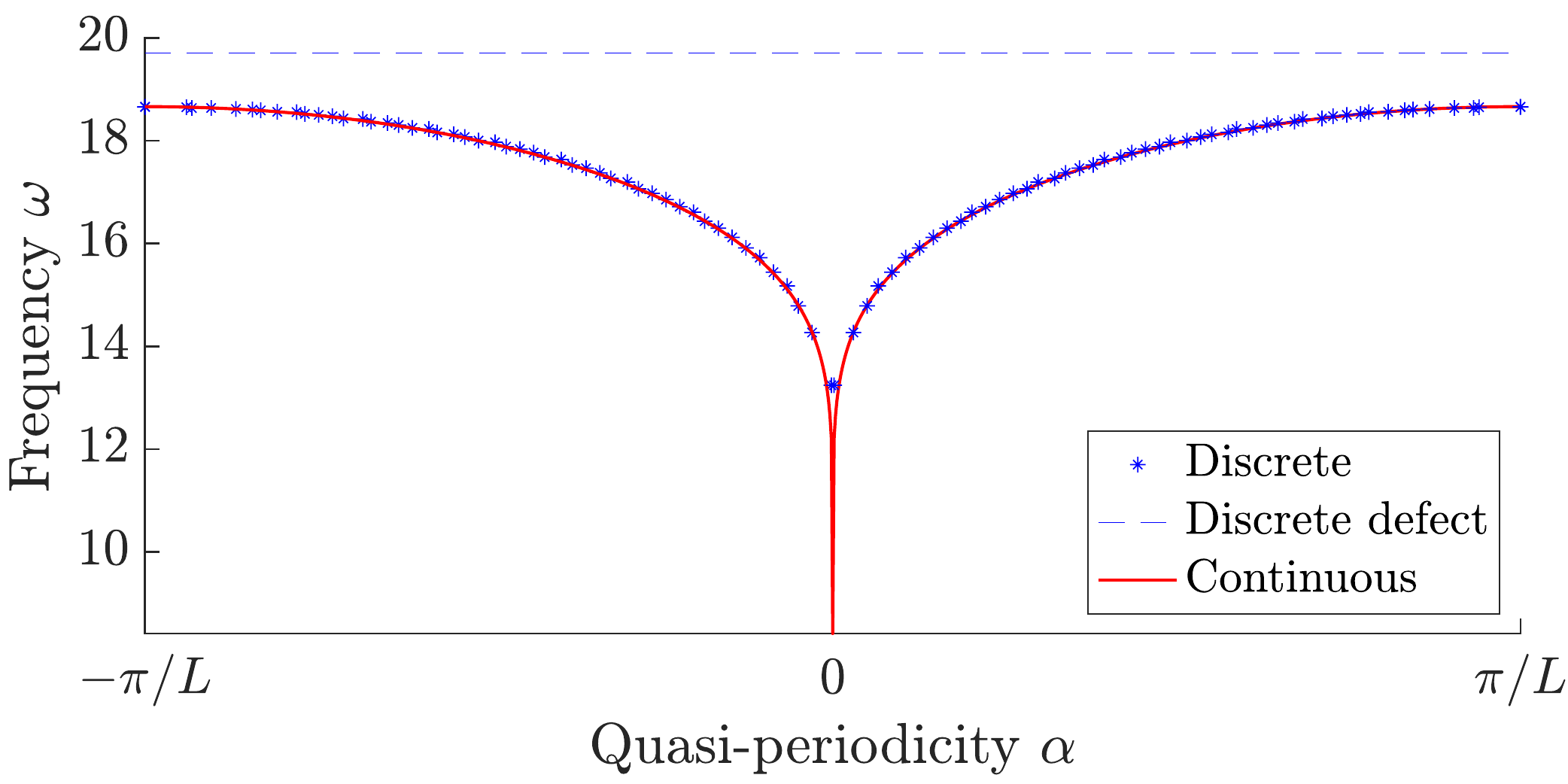}
	\begin{tikzpicture}
	\node at (0,0) {$\cdots$};
	\draw[fill=white!80!gray] (0.5,0) circle(0.2);
	\draw[fill=white!80!gray] (1.2,0) circle(0.2);
	\draw[fill=white!80!gray] (1.9,0) circle(0.2);
	\draw[fill=white!10!gray] (2.6,0) circle(0.2);
	\draw[fill=white!80!gray] (3.3,0) circle(0.2);
	\draw[fill=white!80!gray] (4,0) circle(0.2);
	\draw[fill=white!80!gray] (4.7,0) circle(0.2);
	\node at (5.25,0) {$\cdots$};
	\node[white] at (-0.1,-2) {.};
	\end{tikzpicture}
	\caption{Point defect} \label{fig:pointdefect}
	\end{subfigure}
	
	\vspace{0.2cm}
	
	\begin{subfigure}{\linewidth}
	\includegraphics[width=0.55\linewidth]{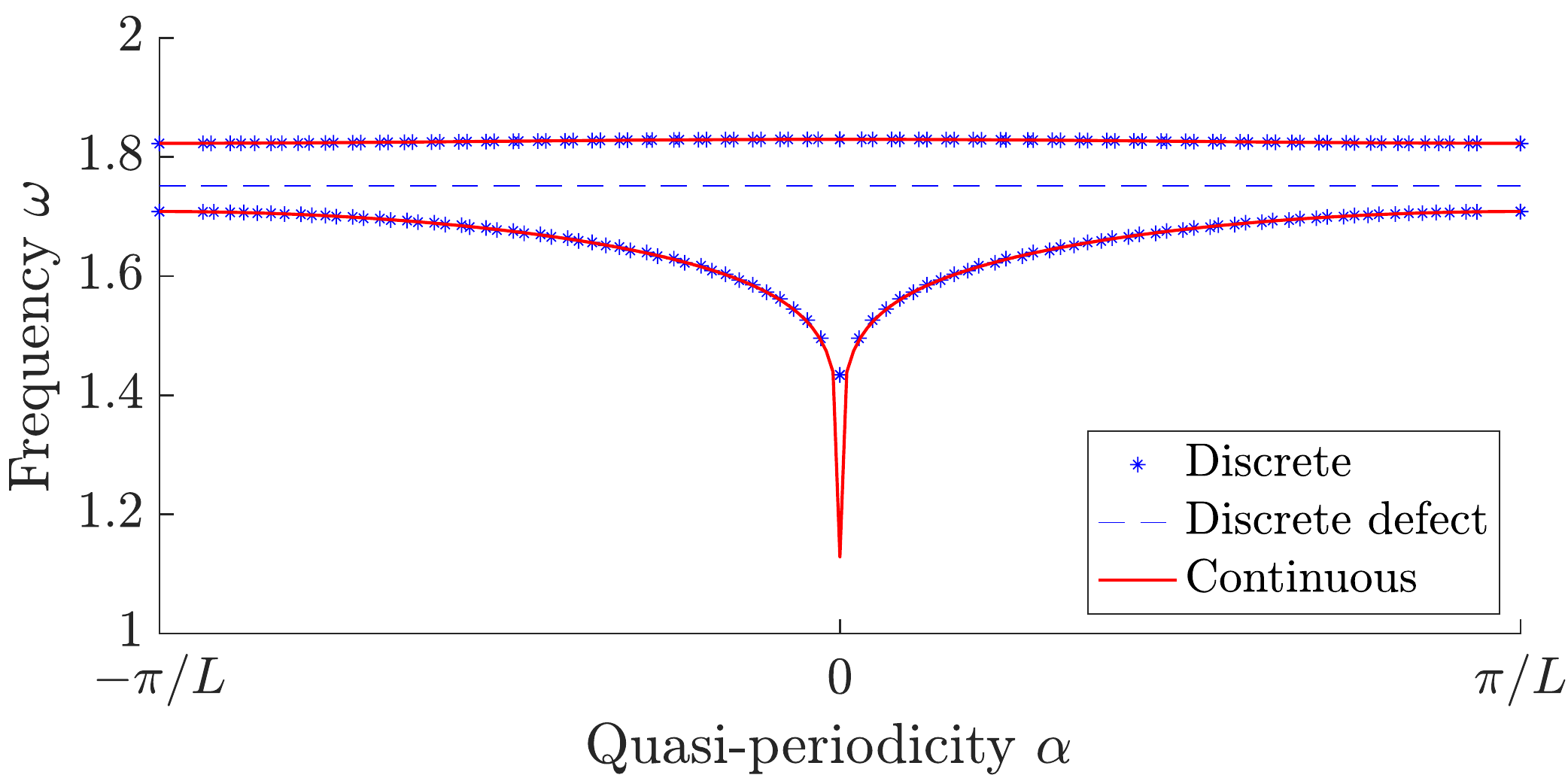}
	\begin{tikzpicture}[scale=0.75]
		\node at (-0.1,0) {$\cdots$};
		\draw[fill=white!80!gray] (0.6,0) circle(0.2);
		\draw[fill=white!80!gray] (1.1,0) circle(0.2);
		\draw[fill=white!80!gray] (2,0) circle(0.2);
		\draw[fill=white!80!gray] (2.5,0) circle(0.2);
		\draw[fill=white!80!gray] (3.4,0) circle(0.2);
		\draw[fill=white!80!gray] (4.3,0) circle(0.2);
		\draw[fill=white!80!gray] (4.8,0) circle(0.2);
		\draw[fill=white!80!gray] (5.7,0) circle(0.2);
		\draw[fill=white!80!gray] (6.2,0) circle(0.2);
		\node at (7,0) {$\cdots$};
		\node[white] at (0,-3) {.};
	\end{tikzpicture}
	\caption{Topological ``SSH'' defect} \label{fig:SSHdefect}
	\end{subfigure}
	\caption{The discrete spectrum a truncated array with defects can be related to the continuous spectrum of the unperturbed periodic structure. (a) A local defect in a periodic array of resonators. The truncated array has 51 resonators with the defect on the 26\textsuperscript{th}. The continuous spectrum of the infinite array is the same as that shown in \Cref{fig:band}. (b) A topological (non-compact) defect in an array of resonator pairs. The truncated array has 101 resonators with the defect on the 51\textsuperscript{st}. The continuous spectrum of the infinite array is the same as that shown in \Cref{fig:band_dimer}. In both cases, the defect introduces a localised eigenmode, which does not have a well-defined associated quasi-periodicity and is show with a dashed line.	 
%
%
	} \label{fig:bands_defects}
\end{figure}

\section{Concluding remarks}
In this work, we have shown the convergence of resonant frequencies of systems of coupled resonators in truncated periodic lattices to the essential spectrum of corresponding infinite lattice. We have studied this using the capacitance matrix model for coupled resonators with long-range interactions. We emphasise that our conclusions extends to other long-range models, since it is the decay of the coupling which is the main feature of the analysis.

The discrete band structure calculations in \Cref{sec:bands} give a concrete way to associate band structures to practically realizable materials which may be finite and aperiodic. Notably, for the field of topological insulators, this opens the possibility of defining \emph{discretely} defined invariants, defined only in terms of the eigenvalues and eigenmodes of the finite interface structures rather than the Bloch eigenvalues and eigenmodes of corresponding infinite, periodic, structures.

We emphasise that the matrix model adopted in this work is a Hermitian model with time-reversal symmetry. For non-Hermitian models, similar convergence theorems are not expected to hold, and the spectra of the finite and infinite models might be vastly different.  For the discrete band structure calculations in \Cref{sec:bands}, we expect the finite modes to be associated to a \emph{complex} momentum and to adequately describe these, we need to extend the Brillouin zone to the complex plane. A precise study of this setting would be a highly interesting future work.

\appendix
\section{Continuous PDE model} \label{app:cont}
In this appendix, we summarize how the generalized capacitance matrix gives an asymptotic characterisation a system of coupled high-contrast resonators. In particular, it can be used to characterize the subwavelength (\emph{i.e.} asymptotically low-frequency) resonance of the system. We refer the reader to \cite{ammari2021functional} for an in-depth review and extension to other settings. 

We will consider an array of finitely many resonators here, but the modification to infinite periodic systems is straightforward, though an appropriate modification of the Green's function \cite{ammari2021functional}. As previously considered, we suppose that the resonators are given by $D_i\subset\mathbb{R}^3$. We consider the scattering of time-harmonic waves with frequency $\omega$ and will solve a Helmholtz scattering problem in three dimensions. This Helmholtz problem, which can be used to model acoustic, elastic and polarized electromagnetic waves, represents the simplest model for wave propagation that still exhibits the rich phenomena associated to subwavelength physics.

We let $v_i$  denote the wave speed in each resonator $D_i$ so that $k_i=\omega/v_i$ is the wave number in $D_i$. Similarly, the wave speed and wave number in the background medium are denoted by $v$ and $k$. The crucial asymptotic parameters are the contrast parameters $\delta_1,\dots,\delta_N$. For example, in the case of an acoustic system, $\delta_i$ is the ratio of the densities inside and outside the resonator, respectively. In the current formulation, the material parameters may take any values (for example, complex parameters correspond to non-Hermitian systems with energy gain and loss). In the setting of \Cref{sec:capacitance}, we take all parameters to be positive and equal.

Subwavelength resonance will occur in the high-contrast limit
$$\delta_i \to 0.$$
We define $D$ as the collection of resonators:
$$D=\bigcup_{m\in I_r} \bigcup_{i=1}^N (D_i+m),$$
and consider the Helmholtz resonance problem in $D$
\begin{equation} \label{eq:finite_scattering}
	\left\{
	\begin{array} {ll}
		\ds \Delta {u}+ k^2 {u}  = 0 & \text{in } \R^3 \setminus \overline{D}, \\
		\nm
		\ds \Delta {u}+ k_i^2 {u}  = 0 & \text{in } D_i+m, \text{ for } i=1,\dots,N, \ m\in I_r, \\
		\nm
		\ds  {u}|_{+} -{u}|_{-}  =0  & \text{on } \partial D, \\
		\nm
		\ds  \delta_i \frac{\partial {u}}{\partial \nu} \bigg|_{+} - \frac{\partial {u}}{\partial \nu} \bigg|_{-} =0 & \text{on } \partial D_i+m \text{ for } i=1,\dots,N, \ m\in I_r, \\
		\nm
		\multicolumn{2}{l}{\ds u(x) \ \text{satisfies the Sommerfeld radiation condition},}
	\end{array}
	\right.
\end{equation}
where the Sommerfeld radiation condition is given by
\begin{equation} \label{eq:SRC}
	\lim_{|x|\to\infty} |x|\left(\ddp{}{|x|}-\iu k\right)u=0, \quad \text{uniformly in all directions } x/|x|,
\end{equation}
and guarantees that energy is radiated outwards by the scattered solution. 

As mentioned, we take the limit of small contrast parameters while the wave speeds are all of order one. In other words, we take $\delta >0$ such that
\begin{equation}
	\delta_i=O(\delta) \quad\text{and}\quad v,v_i=O(1) \quad \text{as} \quad \delta\to0, \text{ for } i=1,\dots,N.
\end{equation}
Within this setting, we are interested in solutions $\omega$ to the resonance problem \eqref{eq:finite_scattering} that are \emph{subwavelength} in the sense that
\begin{equation}
	\omega\to0 \quad \text{as}\quad \delta\to0.
\end{equation}

To be able to characterize the subwavelength resonant modes of this system, we must define the \emph{generalized} capacitance coefficients. Recall the capacitance coefficients $(C^{mn}_\frm)_{ij}$ from \eqref{eq:Cfinite}. Then, we define the corresponding generalized capacitance coefficient as
\begin{equation} \label{eq:gcm}
	(\C_\frm^{mn})_{ij}=\frac{\delta_i v_i^2}{|D_i^m|} (C^{mn}_\frm)_{ij},
\end{equation}
where $|D_i^m|$ is the volume of the bounded subset $D_i^m$. Then, the eigenvalues of $\C_\frm$ determine the subwavelength resonant frequencies of the system, as described by the following theorem \cite{ammari2021functional}.

\begin{thm}
	Consider a system of $N|I_r|$ subwavelength resonators in $\mathbb{R}^3$. For sufficiently small $\delta>0$, there exist $N|I_r|$ subwavelength resonant frequencies $\omega_1(\delta),\dots,\omega_{N|I_r|}(\delta)$ with non-negative real parts. Further, the subwavelength resonant frequencies are given by
	$$ \omega_n = \sqrt{\lambda_n}+O(\delta) \quad\text{as}\quad \delta\to0,$$
	where $\{\lambda_n: n=1,\dots,N|I_r|\}$ are the eigenvalues of the generalized capacitance matrix $\mathcal{C}_\frm$, which satisfy $\lambda_n=O(\delta)$ as $\delta\to0$.
\end{thm}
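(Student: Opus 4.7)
The plan is to reformulate the PDE resonance problem \eqref{eq:finite_scattering} as a boundary integral equation, then perform a careful asymptotic expansion in the small contrast $\delta$. First I would represent the solution as
\begin{equation*}
u(x) = \begin{cases} \mathcal{S}_D^{k_i}[\phi_i](x), & x\in D_i^m, \\ \mathcal{S}_D^{k}[\phi](x), & x\in \R^3\setminus \overline{D}, \end{cases}
\end{equation*}
where $\phi\in L^2(\p D)$ and $\phi_i\in L^2(\p D_i^m)$ are unknown densities, and $\mathcal{S}_D^k$ denotes the single layer potential with the outgoing Helmholtz Green's function of wavenumber $k$. This ansatz automatically satisfies the interior and exterior Helmholtz equations together with the Sommerfeld radiation condition. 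Imposing the transmission conditions $u|_+ = u|_-$ and $\delta_i \p_\nu u|_+ = \p_\nu u|_-$ on each $\p D_i^m$, and using the standard jump relations for the normal derivative of the single layer potential, yields a block boundary integral system of the form $\mathcal{A}(\omega,\delta)[\Phi]=0$, where $\Phi=(\phi,\phi_1,\ldots)$. The resonance problem is then equivalent to finding characteristic values of $\mathcal{A}(\omega,\delta)$.

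Next I would expand $\mathcal{A}(\omega,\delta)$ asymptotically using the small-frequency expansions $\mathcal{S}_D^k = \mathcal{S}_D + k\mathcal{S}_{D,1} + O(k^2)$ and $(\mathcal{K}_D^k)^* = \mathcal{K}_D^* + O(\omega^2)$, which are standard facts for three-dimensional boundary integral operators. Substituting these expansions into the transmission conditions and eliminating the interior densities $\phi_i$, I obtain a reduced equation on $\p D$ of the form $\mathcal{A}_0(\omega,\delta)[\phi] + O(\omega+\delta^2)=0$, with $\mathcal{A}_0$ encoding the leading interactions. The crux is that the key combination $\int_{\p D_i^m}\mathcal{S}_D^{-1}[\chi_{\p D_j^n}]\dx\sigma$ appearing in the leading operator is precisely the capacitance coefficient $(C_{\frm}^{mn})_{ij}$ from \eqref{eq:Cfinite}. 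Testing the resulting equation against the indicator functions $\chi_{\p D_i^m}$ (equivalent to integrating over each resonator boundary) produces, after carrying the $\delta_i v_i^2/|D_i^m|$ factor coming from balancing the normal-derivative jump with the $\omega^2$ term inside the resonators, a finite-dimensional system
\begin{equation*}
\bigl( \mathcal{C}_\frm - \omega^2 I \bigr) x + O(\omega+\delta\omega)=0.
\end{equation*}

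The leading order characteristic equation is therefore $\det(\mathcal{C}_\frm - \omega^2 I)=0$, whose solutions are $\omega^2=\lambda_n$ for the eigenvalues $\lambda_n$ of the generalized capacitance matrix. Since $\mathcal{C}_\frm = O(\delta)$ by \eqref{eq:gcm}, we obtain $\lambda_n = O(\delta)$ and hence $N|I_r|$ values $\sqrt{\lambda_n}$ that go to zero with $\delta$, giving the subwavelength scaling $\omega\to 0$. Finally I would upgrade this leading-order analysis to the full nonlinear operator $\mathcal{A}(\omega,\delta)$ via the generalized Rouché theorem (Gohberg--Sigal theory) for operator-valued analytic functions: since $\mathcal{A}_0$ has exactly $N|I_r|$ characteristic values (counted with multiplicity) in a small neighbourhood of $0$, the perturbed operator $\mathcal{A}(\omega,\delta)$ has the same number, with each perturbed resonance differing from $\sqrt{\lambda_n}$ by $O(\delta)$. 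The non-negativity of the real parts follows from the self-adjointness of $\mathcal{C}_\frm$ and the outgoing sign convention in the Sommerfeld condition. The main obstacle, as usual in this circle of results, is the justification of the operator-theoretic perturbation step: one must show that $\mathcal{A}_0$ is invertible away from its characteristic values uniformly in a shrinking $\delta$-dependent neighbourhood of zero, so that Rouché's theorem applies and the error estimate $O(\delta)$ is genuinely uniform in the system size $N|I_r|$.
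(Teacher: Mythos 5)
The paper does not prove this theorem itself but imports it from \cite{ammari2021functional}, and your outline reproduces essentially the argument given there: single-layer-potential ansatz, transmission conditions yielding an operator-valued characteristic-value problem $\mathcal{A}(\omega,\delta)[\Phi]=0$, low-frequency expansion of the boundary integral operators, reduction to the generalized capacitance matrix eigenvalue problem at leading order, and the Gohberg--Sigal generalized Rouch\'e theorem to count and locate the resonances. This is the correct and standard route (your intermediate error term should read $O(\delta\omega^{2}+\omega^{4})$ rather than $O(\omega+\delta\omega)$, since an $O(\omega)$ remainder would dominate the $\omega^{2}$ term as $\omega\to0$, but this is a notational slip rather than a gap).
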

A similar result exists for an infinite periodic structure, in terms of the eigenvalues of the {generalized} quasi-periodic capacitance matrix, as defined in \eqref{eq:Calpha}; see \cite{ammari2021functional} for details.

\section*{Acknowledgements}

The work of HA was supported by Swiss National Science Foundation grant number 200021--200307. The work of BD was supported by a fellowship funded by the Engineering and Physical Sciences Research Council under grant number EP/X027422/1. 

\bibliographystyle{abbrv}
\bibliography{essential}{}
\end{document}